\newenvironment{di}{\list{$\bullet$}{\itemsep0pt\parsep0pt}}{\endlist}
\newenvironment{claimproof}{\begin{proof}}{\end{proof}}
\let\subparagraph\paragraph
\newtheorem{theorem}{Theorem}
\newtheorem{lemma}[theorem]{Lemma}
\newtheorem{claim}[theorem]{Claim}
\newtheorem{corollary}[theorem]{Corollary}
\let\geq\geqslant
\let\leq\leqslant
\let\ge\geqslant
\let\le\leqslant
\newcommand{\opmaketree}{\mathsf{maketree}}
\newcommand{\oplink}{\mathsf{link}}
\newcommand{\opcut}{\mathsf{cut}}
\newcommand{\opfindroot}{\mathsf{findroot}}
\newcommand{\opevert}{\mathsf{evert}}
\newcommand{\opexpose}{\mathsf{expose}}
\newcommand{\opfindcost}{\mathsf{findcost}}
\newcommand{\opaddcost}{\mathsf{addcost}}
\newcommand{\opfindblue}{\mathsf{findblue}}
\newcommand{\opaddblue}{\mathsf{addblue}}
\newcommand{\opaddred}{\mathsf{addred}}
\newcommand{\bitreverse}{\mathit{reverse}}
\newcommand{\bitparent}{\mathit{parent}}
\newcommand{\bitblue}{\mathit{blue}}
\newcommand{\bitred}{\mathit{red}}
\newcommand{\bitminblue}{\mathit{minblue}}
\newcommand{\bitminred}{\mathit{minred}}
\newcommand{\bitDblue}{\Delta\mathit{blue}}
\newcommand{\bitDred}{\Delta\mathit{red}}
\newcommand{\bitDminblue}{\Delta\mathit{minblue}}
\newcommand{\bitDminred}{\Delta\mathit{minred}}
\newcommand{\Reals}{\mathbb{R}}
\newcommand{\eps}{\varepsilon}  
\DeclareMathOperator{\polylog}{polylog}
\DeclareMathOperator{\Dgm}{Dgm}
\def\section{\@startsection {section}{1}{\z@}%
  {-3.5ex plus -1ex
    minus -.2ex}{2.3ex plus .2ex}{\large\bf}}
\def\subsection{\@startsection{subsection}{2}%
  {\z@}{-3.25ex plus
    -1ex minus -.2ex}{1.5ex plus .2ex}{\normalsize\bf}}
\def\@fnsymbol#1{\ensuremath{\ifcase#1\or 1\or 2\or 3\or 4\or
    5\or 6\or 7 \or 8\ or 9 \or 10\or 11 \else\@ctrerr\fi}}
\title{Geometric Matching and Bottleneck Problems}
\author{Sergio Cabello%
  \thanks{Faculty of Mathematics and Physics, University of Ljubljana, Slovenia, 
	and Institute of Mathematics, Physics and Mechanics, Slovenia. 
	sergio.cabello@fmf.uni-lj.si}
  \and
  Siu-Wing Cheng%
  \thanks{Department of Computer Science and Engineering,
    HKUST, Clear Water Bay, Hong Kong. scheng@cse.ust.hk}
  \and
  Otfried Cheong%
  \thanks{SCALGO, Aarhus, Denmark. otfried@scalgo.com}
  \and
  Christian Knauer%
  \thanks{Department of Computer Science, University of Bayreuth,
    Germany. christian.knauer@uni-bayreuth.de}}
\newcommand{\OurFunding}{This research was funded in part by the Slovenian Research and
Innovation Agency (P1-0297, J1-2452, N1-0218, N1-0285), in part by the
Research Grants Council, Hong Kong, China (project no.~16207419), and
in part by the European Union (ERC, KARST, 101071836).  Views and
opinions expressed are however those of the authors only and do not
necessarily reflect those of the European Union or the European
Research Council.  Neither the European Union nor the granting
authority can be held responsible for them.}
\newcommand{\OurAcknowledgements}{The authors thank Sang Won Bae for helpful discussions.}
\begin{document}

\maketitle

\begin{abstract}
  Let $P$ be a set of at most~$n$ points and let $R$ be a set of at
  most~$n$ geometric ranges, such as for example disks or rectangles,
  where each~$p \in P$ has an associated \emph{supply} $s_{p} > 0$,
  and each~$r \in R$ has an associated \emph{demand}~$d_{r} > 0$.  A
  (many-to-many) \emph{matching} is a set $\mathcal{A}$ of ordered
  triples~$(p,r,a_{pr}) \in P \times R \times \Reals_{>0}$ such
  that~$p \in r$ and the $a_{pr}$'s satisfy the constraints given by
  the supplies and demands.  We show how to compute a maximum
  matching, that is, a matching maximizing~$\sum_{(p,r,a_{pr}) \in
    \mathcal{A}} a_{pr}$.

  Using our techniques, we can also solve minimum bottleneck problems,
  such as computing a perfect matching between a set of~$n$ red
  points~$P$ and a set of~$n$ blue points~$Q$ that minimizes the
  length of the longest edge. For the~$L_\infty$-metric, we can do
  this in time~$O(n^{1+\eps})$ in any fixed dimension, for
  the~$L_2$-metric in the plane in time~$O(n^{4/3 + \eps})$, for
  any~$\eps > 0$.
\end{abstract}

\section{Introduction}

The classic \emph{assignment} problem is to assign $n$~items (such as
jobs) to $n$~other items (such as machines).  It asks for a bijection
between the two sets that maximizes some quality or minimizes some
cost function.  Computing a perfect matching in a bipartite graph is
one special case of the assignment problem.  The assignment problem
stood at the cradle of the field of combinatorial optimization in the
1950's, and influenced the development of integer and linear
programming and the theory of network flows.  The book by Burkard et
al.~\cite{burkard} presents the history of the problem and results for
different cost functions.

A natural generalization is to relax the one-to-one requirement.
Problems of this nature arise commonly in practice, for instance when
assigning mobile clients to base stations, patients to hospitals,
dinner guests to restaurants, etc.  Such problems are commonly solved
using maximum-flow, see for instance Exercises~7, 8, 9, 16, 19, 26 in
Chapter~7 of Kleinberg and Tardos' book~\cite{kleinberg-tardos}.

\subparagraph*{Geometric matching problems.}

In several settings one encounters assignment or matching problems
between geometric objects. Let us define these geometric matching
problems more precisely.  Let $P$ be a set of points and let $R$ be a
set of geometric ranges, such as for example disks or rectangles. We
use $n$ to denote $|P|+|R|$.  Their \emph{incidence graph}, denoted by
$I(P,R)$, is the bipartite graph (see Figure~\ref{fig:incidences} for
an example)
\[ 
I(P,R) ~=~ \big(P\sqcup R, \{pr \mid p\in P, r\in R, p\in r \} \big).
\]

\begin{figure}
  \centerline{\includegraphics[width=.8\textwidth,page=3]{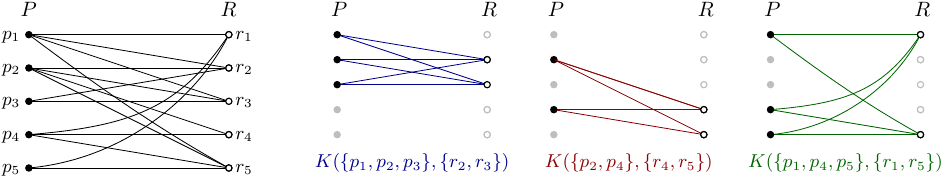}}
  \caption{A set of six points and a set of four congruent disks in the plane (left) 
    and their incidence graph (right).}
  \label{fig:incidences}
\end{figure}

Each point~$p \in P$ has an associated \emph{supply}~$s_{p} > 0$ and
each range~$r \in R$ has an associated \emph{demand}~$d_{r} > 0$.  We
consider the elements of~$P$ as \emph{providers} and the elements
of~$R$ as \emph{consumers} of a common commodity that is infinitely
divisible.  The commodity can be sent only along the edges of~$I(P,R)$,
and the goal is to maximize the amount of commodity that is sent while
respecting the supplies and the demands.  The \emph{target value} of
the instance is
\[
\mu ~=~ \min\bigg\{ \sum_{p\in P} s_p,\; \sum_{r\in R} d_r \bigg\},
\]
which is an obvious upper bound on the amount of the commodity that can be sent.

Formally, a (many-to-many) \emph{matching} is a set $\mathcal{A}$ of
ordered triples~$(p,r,a_{pr})\in P\times R\times\Reals_{>0}$ such that
\begin{di}
\item for every $(p,r,a_{pr}) \in \mathcal{A}$, $pr$ is an edge of $I(P,R)$;
\item for every $(p,r) \in P\times R$, there is at most one triple $(p,r,\cdot)$ in $\mathcal{A}$;
\item for every point $p \in P$, $\sum_{r\in R:~(p,r,a_{pr}) \in {\cal A}} a_{pr} \leq s_p$;
\item for every range $r \in R$, $\sum_{p\in P:~(p,r,a_{pr}) \in {\cal A}} a_{pr} \leq d_r$.
\end{di}
The \emph{value} of the matching $\mathcal{A}$ is
$\eta=\eta(\mathcal{A})=\sum_{(p,r,a_{pr}) \in {\cal A}} a_{pr}$, and
the objective is to maximize this value.  A matching is
\emph{satisfying} when its value is the target value, that is,
when~$\eta=\mu$.  The \emph{size} of a matching is $|\mathcal{A}|$,
that is, the number of edges of $I(P,R)$ that carry part of the
commodity.

A geometric matching problem can be solved using a standard
transformation to a maximum-flow instance built around~$I(P,R)$ (with
directed edges).  However, any approach that explicitly constructs
$I(P,R)$ is going to spend time at least proportional to its size,
which may be $\Theta(n^2)$. Thus, even when using the recent
breakthrough near-linear time algorithm of Chen et
al.~\cite{ChenKLPGS22} for maximum-flow, we will spend quadratic time.

A key question is whether one can use the geometry of the input to
solve the matching problem without explicitly constructing~$I(P,R)$.
Our contribution is to note that a compact representation of the
incidences of geometric objects can indeed be used to solve the
geometric matching problem without constructing~$I(P,R)$. More
precisely, we use a compact representation of $I(P,R)$ as the union of
complete bipartite graphs.  We denote by $\sigma$ the size of this
compact representation, which will be defined in
Section~\ref{sec:compact}.  We provide two algorithms for the maximum
matching problem depending on what we want to assume about the
supplies and the demands.

\subparagraph*{Integral supplies and demands.}

Our first algorithm uses the recent algorithm of Chen et
al.~\cite{ChenKLPGS22} to solve maximum-flow instances in near-linear
time and assumes that the supplies and the demands are integers.
Recall that $\mu = \min\bigl\{\sum_{p \in P} s_p, \sum_{r \in R}
d_r\bigr\}$.  The running time is $O(\sigma^{1+\eps} \log \mu)$ with
high probability (more specifically, the probability is at least $1 -
\sigma^{-O(1)}$), for any constant~$\eps>0$.  Retrospectively, this
result is a simple combination of the recent breakthrough in computing
maximum flow and compact representations of incidences in geometric
settings, a common tool in computational geometry.

This first algorithm has interesting consequences for several
fundamental problems concerning one-to-one matchings in geometric
settings. However, not every solution to the matching problem with
unit supplies and demands is a one-to-one matching, because the
solution may use fractional flow values on some edges.  For example,
on a regular bipartite graph solving the matching problem with unit
supplies and demands is trivial using fractional flows, but it is
still a challenge to determine an integral maximum flow.  We solve
this problem using the approach in~\cite{GoelKK13,Madry13}.

Here is a list of consequences of our new algorithm; in all cases, $n$
is the number of points and ranges, the results hold with high
probability, and $\eps$ is an arbitrary positive constant that affects
the constants hidden in the $O$-notation:
\begin{itemize}
\item When $R$ is a set of axis parallel rectangles in the plane, we
  can compute a maximum one-to-one matching in~$I(P,R)$ in
  $O(n^{1+\eps})$ time.
\item When $R$ is a set of congruent disks in the plane, we can
  compute a maximum one-to-one matching in~$I(P,R)$ in
  $O(n^{4/3+\eps})$ time.
\item For two given point sets $P$ and~$Q$ in the plane, and a real
  value~$\lambda\ge 0$, we can compute a maximum one-to-one matching
  in the bipartite graph on~$P\sqcup Q$ that connects points at
  distance at most~$\lambda$. In the $L_\infty$-metric, the running
  time is~$O(n^{1+\eps})$, while in the $L_2$-metric the running time
  is~$O(n^{4/3+\eps})$.
\item We can decide in $O(n^{1+\eps})$ time if the bottleneck distance
  of two persistence diagrams with $n$~points outside the diagonal is
  smaller or equal to a given value~$\lambda\ge 0$.
\end{itemize}
For all problems, the previous best algorithm had a running time of
roughly $O(n^{3/2})$ and was due to Efrat, Itai and
Katz~\cite{efrat}. That result was an efficient version of the
maximum-matching algorithm by Hopcroft and Karp~\cite{hopcroft-karp},
where one makes $O(n^{1/2})$ rounds, and in each round one finds
several augmenting paths of the same length. Using range searching
data structures, each round is implemented in near-linear time.

If one is willing to use as a parameter the so-called density $\rho$
of the geometric objects, then Bonnet, Cabello, and
Mulzer~\cite{BonnetCM} provided an algorithm to compute a maximum
(one-to-one) matching of $I(P,R)$ in $O(\rho^{3\omega/2}
n^{\omega/2})$ time with high probability, where $\omega>2$ is a
constant such that any two $n\times n$ matrices can be multiplied in
time~$O(n^\omega)$. With the current bounds on~$\omega$, the running
time is~$O(\rho^{3.56}n^{1.19})$. Note that in the worst case
$\rho=\Theta(n)$ and then the algorithm takes $\Omega(n^4)$ time, even
for~$\omega$ arbitrarily close to~$2$.  For congruent disks and the
current value of~$\omega$, our new bound is better when
$\rho=\Omega(n^{0.041})$; if~$\omega$ can be made arbitrarily close
to~$2$, our new bound for congruent disks is still better
when~$\rho=\Omega(n^{1/9+\eps})$ for some~$\eps>0$.  For axis-parallel
rectangles and the current value of~$\omega$, our new bound is better
for any density~$\rho$; if~$\omega$ can be made arbitrarily close
to~$2$, our new bound for rectangles is better as soon
as~$\rho=\Omega(n^{\eps})$ for some $\eps>0$.  They give other results
that are independent of the density~$\rho$, but those do not apply to
the bipartite setting that we consider here.

Har-Peled and Yang~\cite{Har-PeledY22} considered
obtaining $(1+\eps)$-approximations to the maximum matching in
near-linear time.

\subparagraph*{Arbitrary supplies and demands.}

In our second algorithm we do not make any assumption about the
supplies and the demands.  In computational geometry it is common to
allow arbitrary real numbers in the input, because the input may be
coming from a previous computation and may be, say, a Euclidean
distance or the area of a disk.

Our algorithm is a variant of Dinitz' algorithm for maximum-flow as
implemented by Sleator and Tarjan~\cite{tarjan}.  We prune the current
matching after each phase of Dinitz' algorithm, ensuring that the
current flow is non-zero on a linear number of edges only.  We also
store the edges from providers to consumers in each residual graph
implicitly using a clique cover of the graph.  Assuming a compact
representation of the incidence graph~$I(P,R)$ of size~$\sigma$ we
compute a maximum matching in~$O(n \sigma \log \sigma)$ time.

Our time bound is better than using the usual transformation into a
maximum flow and then using the fastest known combinatorial algorithms
for maximum flow that work for arbitrary edge capacities: For
maximum-flow instances on graphs with $|V|$~vertices and $|E|$~edges,
the best known running time is $O(|V||E|)$~\cite{KingRT94,DBLP:conf/stoc/Orlin13,OG2021}.
In our setting, this leads to
a worst-case running time of~$O(n^{3})$ (for the decision problem),
because $I(P,R)$ could be dense.

Our time bound is also better than using the construction used for the
integral case; see Figure~\ref{fig:example2}. In this construction, we
have a graph with $O(\sigma)$~edges and vertices, which means that a
maximum flow can be computed in~$O(\sigma^2)$ time. This time bound is
strictly better than $O(n \sigma \log \sigma)$ only when
$n=\omega(\sigma/\log \sigma)$, but this never happens for any of the
geometric scenarios we consider.  Even for points and intervals on the
line we have $\sigma=\Theta(n \log n)$.

Our algorithm has the following consequences for the geometric
matching problem with arbitrary supplies and demands: 
\begin{itemize}
\item When $R$ is a set of axis parallel rectangles in the plane, 
  we can compute a maximum matching in $I(P,R)$ in $O(n^2\log^3 n)$ time. 
\item When $R$ is a set of congruent disks in the plane,
  we can compute a maximum matching in $I(P,R)$ in $O(n^{7/3}\log^2 n)$ time.
\item For two given point sets $P$ and $Q$ in the plane, and a real
  value $\lambda\ge 0$, we can compute a maximum matching in the
  bipartite graph on $P\sqcup Q$ that connects points at distance at
  most~$\lambda$. In the $L_\infty$-metric, the running time
  is~$O(n^2\log^3 n)$, while in the $L_2$-metric the running time
  is~$O(n^{7/3}\log^2 n)$.
\end{itemize}

\subparagraph*{Bottleneck instances.}

Consider the scenario where we have two sets of $n$ points each in the
plane and we want to compute a perfect matching between them that
minimizes the length of the longest edge. That is called a
\emph{bottleneck matching}, and its computation has been considered
before by Efrat, Itai and Katz~\cite{efrat}.  A precise definition is
given in Section~\ref{sec:bottleneck}.

A similar concept of using a perfect matching to define a bottleneck
distance has been introduced in the context of persistence diagrams;
see~\cite{Cohen-SteinerEH07,DW,EH,KerberMN17} or the definition in
Section~\ref{sec:persistence}. This concept is closely related to the
$L_\infty$-metric in the plane.

Similarly, for non-unit demands and supplies we can consider a
matching problem between two sets of points where we want to compute a
satisfying matching that minimizes the length of the longest edge
being used. This is the bottleneck value for the geometric matching
problem.  Long and Wong~\cite{long2017} introduced this~\emph{spatial
matching problem}, which they denoted~\textsc{spm-mm}.  Closely
related problems have been considered in the field of spatial
databases, for example when assigning mobile clients to base stations
with limited capacity~\cite{long2013, leong}.  In this context, one
uses the $L_2$-metric.

In all these problems, we can solve the \emph{decision problem} using
the algorithms described above: For a given~$\lambda$, is there a
perfect matching or a satisfying matching where all edges have
length at most~$\lambda$?  We can then perform a binary search in a
set of (so-called \emph{critical}) values that is known to contain the
optimal value, without substantially increasing the running time.  We
obtain the following results, where, as before, the running times hold
with high probability and $\eps$ is an arbitrarily small positive
constant:
\begin{itemize}
\item The bottleneck matching for two sets of $n$~points in the plane
  under the $L_2$-metric can be computed in in~$O(n^{4/3+\eps})$ time.
  The previous time bound was $O(n^{3/2}\log n)$~\cite{efrat}.
\item The bottleneck matching for two sets of $n$~points in the plane
  under the $L_1$- or~$L_\infty$-metric can be computed in
  $O(n^{1+\eps})$~time.  The previous bound was~$O(n^{3/2}\log
  n)$~\cite{efrat}.
\item The bottleneck distance of two persistence diagrams with
  $n$~points outside the diagonal can be computed in~$O(n^{1+\eps})$
  time.  The previous bound was~$O(n^{3/2}\log n)$.
\item The bottleneck value for a geometric matching problem can be
  computed in $O(n^{4/3+\eps})$~time, if the supplies and the demands
  are small integers, and in $O(n^{7/3}\log^3 n)$ time for arbitrary
  supplies and demands.
\end{itemize}

Note that the recent result of Katz and Sharir~\cite{KatzS23} is for
the bottleneck matching problem in the \emph{non-bipartite} case; 
the techniques do not apply to the bipartite case we consider here.

It is perhaps surprising that minimum bottleneck matchings or
matchings are not necessarily planar.  Finding a minimum \emph{planar}
bottleneck matching is hard~\cite{carlsson-2015,KARIMABUAFFASH2014447}.

\subparagraph*{Roadmap.}

In Section~\ref{sec:compact} we describe the compact representation of
incidence graphs using complete bipartite cliques.  In
Section~\ref{sec:integral} we present the algorithms for the case of
integral supplies and demands, while in in Section~\ref{sec:arbitrary}
we consider the setting with arbitrary values.  In
Section~\ref{sec:red-blue-link-cut} we present a dynamic tree, a data
structure used in Section~\ref{sec:arbitrary} as a black-box.  We
conclude in Section~\ref{sec:conclusions}.


\section{Compact representation of incidences}
\label{sec:compact}

For any two disjoint sets $U$ and $V$, let $K(U,V)$ denote the complete
bipartite graph with vertex set $U\sqcup V$ and edge set $\{ uv\mid u\in U, v\in V\}$.

Let $G=(U,V,E)$ be a bipartite graph with vertex set~$U \sqcup V$. A
\emph{compact representation} of~$G$ is a collection~$\{ (U_i,V_i)\mid
i\in I\}$ such that
\begin{di}
\item for each $i\in I$, $U_i\subseteq U$ and $V_i\subseteq V$, and
\item $G$ is the union of the complete bipartite graphs $K(U_i,V_i)$ over $i\in I$.
\end{di}
The compact representation is edge-disjoint if the graphs $K(U_i,V_i)$, $i\in I$,
are pairwise edge-disjoint.
The \emph{size} of the compact representation is defined as 
$\sum_{i\in I} (|U_i|+|V_i|)$. The size represents the length of the description
of the representation, assuming that vertices can be represented in constant space
(the bit complexity would be slightly larger).
See Figure~\ref{fig:example1} for an example.

\begin{figure}
  \centerline{\includegraphics[width=\textwidth,page=1]{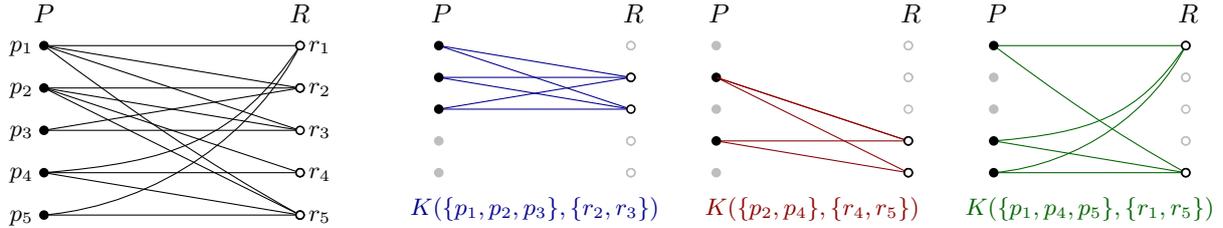}}
  \caption{An example of a bipartite graph (left) represented as the
    union of three complete bipartite graphs (right). In this example,
    the compact representation has size $(3+2)+(2+2)+(3+2)=14$.  Note
    that this compact representation is not edge-disjoint because of
    the edge $p_4r_5$.}
  \label{fig:example1}
\end{figure}

Each bipartite graph $G=(U,V,E)$ has a trivial compact representation
of size $\Theta(|U|+|V|+|E|)$, namely $\{ (\{u\},\{v\})\mid uv\in E \}$.
However, some bipartite graphs have compact representations of size $o(|E|)$.
We next describe a few such cases arising in geometry.

\begin{theorem}[Implicit in~{\cite[Section 3]{ae-survey}} or {\cite[Chapter 5]{bkos-08}}] 
  \label{thm:point+rectangle}
  Let $P$ be a set of at most $n$ points and let $R$ be a set of at
  most $n$ axis-parallel boxes in~$\Reals^d$, for constant~$d$.  We
  can compute an edge-disjoint compact representation of the incidence
  graph~$I(P,R)$ of size $O(n \log^d n)$ in time $O(n\log^d n)$.
\end{theorem}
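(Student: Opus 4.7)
\medskip

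\noindent\textbf{Proof proposal.} The plan is to build a multi-level segment tree whose nodes naturally induce complete bipartite pieces of $I(P,R)$; this is the standard range-searching trick, and the only things to check carefully are edge-disjointness and the size/time bounds.

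For the one-dimensional case ($d=1$), I would build a segment tree $T$ on the $2|R|$ endpoints of the $x$-intervals of the boxes. Each box $r\in R$ is stored at the $O(\log n)$ canonical nodes whose elementary intervals form a disjoint decomposition of $r$'s interval; call this set $C(r)$. Let $R_v=\{r\in R : v\in C(r)\}$ and $P_v=\{p\in P : \text{the search path of }p\text{ in }T\text{ passes through }v\}$ (equivalently, the points whose coordinate lies in $v$'s elementary interval). The representation is $\{(P_v,R_v) : v\in T\}$. For correctness, any $(p,r)$ with $p\in r$ appears in $K(P_v,R_v)$ for every $v\in C(r)$ on the search path of $p$, and since the canonical nodes of $r$ form an antichain, exactly one such $v$ exists; this gives both correctness and edge-disjointness. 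Each $p$ contributes to $O(\log n)$ sets $P_v$ and each $r$ to $O(\log n)$ sets $R_v$, so the total size is $O(n\log n)$.

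For $d\ge 2$ I would iterate the construction in the standard multi-level fashion. The outer tree $T_1$ is built on $x$-coordinates as above. At every node $v$ of $T_1$ I would recursively build a $(d-1)$-dimensional representation of the incidence graph between $P_v$ and the projection of $R_v$ onto the remaining $d-1$ coordinates. The final representation is the union, over all nodes $v$ in all levels, of the cliques $(P'_v, R'_v)$ produced by the innermost trees. Edge-disjointness follows by induction: for each incidence $(p,r)$, the first level picks the unique canonical node of $r$'s $x$-interval on $p$'s search path, the second level picks the unique canonical node in the $y$-tree at that node, and so on, so the incidence is placed in exactly one clique. The size and construction time satisfy $S(n,d)=O(n\log n)+O(\log n)\cdot S(n,d-1)$, yielding $O(n\log^d n)$, matching the classical bound for multi-level range/segment trees.

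The potentially delicate step is the recursive bookkeeping: when passing from level $i$ to level $i+1$ at a node $v$, one must feed the sub-instance $(P_v,R_v)$ into the next-level tree with the correct projected coordinates, and argue that the canonical-node assignments across all $d$ levels compose into a unique identifier for each incidence. Once that is set up, everything else is just sorting and threading pointers between consecutive levels, exactly as in Chapter~5 of~\cite{bkos-08}, so the $O(n\log^d n)$ preprocessing time and representation size follow immediately.
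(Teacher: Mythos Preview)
Your proposal is correct and is exactly the standard construction the paper defers to; the paper itself gives no proof of this theorem, only the citations to~\cite{ae-survey} and~\cite{bkos-08}, and your multi-level segment-tree argument is precisely what those references contain. One small quibble: the recurrence $S(n,d)=O(n\log n)+O(\log n)\cdot S(n,d-1)$ is informal, since the second-level subproblems are many instances of varying sizes rather than a single size-$n$ instance; the clean statement is that the total size of all level-2 subproblems is $O(n\log n)$ and each has at most $n$ elements, whence $\sum_v S(n_v,d-1)\le \sum_v O(n_v\log^{d-1}n)=O(n\log^d n)$, but your conclusion is of course right.
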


\begin{theorem}[\cite{katz-1997}]
  \label{thm:point+disk1}
  Let $P$ be a set of at most $n$ points and let $D$ be a set of at
  most $n$ congruent disks in the plane.  We can compute an
  edge-disjoint compact representation of the incidence
  graph~$I(P,D)$ of size $O(n^{4/3} \log n)$ in time $O(n^{4/3} \log
  n)$.
\end{theorem}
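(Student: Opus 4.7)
The plan is to linearize point-in-disk containment via the paraboloid lifting, reducing to a biclique cover of point-halfspace incidences in $\mathbb{R}^3$, and then invoke the standard partition-tree and hierarchical-cutting machinery from computational geometry.

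First I would map each point $p=(a,b)\in P$ to $\phi(p)=(a,b,a^2+b^2)\in\mathbb{R}^3$. For the common radius $\rho$, the containment condition $(a-x)^2+(b-y)^2\le\rho^2$ rearranges to $(a^2+b^2)-2xa-2yb\le\rho^2-x^2-y^2$, which is linear in the three coordinates of $\phi(p)$. Hence every congruent disk centered at $(x,y)$ becomes a halfspace $h_c\subset\mathbb{R}^3$, and the incidence graph $I(P,D)$ is isomorphic to the bipartite incidence graph between the $n$ lifted points and the $n$ halfspaces. An edge-disjoint compact representation of the latter translates back immediately to one of the former.

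Next I would construct the biclique cover using a two-level hierarchical decomposition. Apply Matou\v{s}ek's partition theorem to the lifted points to obtain a simplicial partition tree in which any halfspace in $\mathbb{R}^3$ crosses only $O(r^{2/3})$ children at a node of fan-out $r$, the optimal crossing bound in three dimensions. Symmetrically, build a hierarchical cutting on the halfspaces (equivalently, a partition tree on their dual points). Then walk the two trees in lockstep: whenever every halfspace in the current canonical subset $H_v$ completely contains the current simplex of $u$, emit the biclique $(P_u,H_v)$ and prune; otherwise recurse into the crossed children on either side. By charging the cost of the joint traversal against the crossing numbers and balancing the fan-outs on the two sides, the total size of the emitted bicliques works out to $O(n^{4/3}\log n)$, and the construction time matches the size.

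To make the cover edge-disjoint, I would adopt the standard rule of associating each incidence to the unique pair of nodes at which it is first certified, so that each edge appears in exactly one $K(U_i,V_i)$. The main obstacle is shaving the exponent down to $4/3$: a single partition tree alone gives a valid cover but of size $O(n^{5/3})$, matching the cost of answering $n$ independent halfspace queries. The $n^{4/3}$ bound requires the symmetric two-level structure, whose analysis hinges on the tight $r^{2/3}$ crossing number for halfspaces in $\mathbb{R}^3$; this is precisely where one exploits that the disks are congruent, since the resulting halfspaces form a two-parameter family rather than a general three-parameter one, and this is what lets the Szemer\'edi--Trotter-type bound $O(n^{4/3})$ on point--unit-circle incidences be matched algorithmically up to the logarithmic factor in the statement.
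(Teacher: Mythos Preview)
The paper does not prove this statement; it is simply cited from Katz and Sharir. So there is no in-paper argument to compare against, but your proposed route has a genuine gap.

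Your plan lifts to point--halfspace incidences in $\mathbb{R}^3$ and then invokes Matou\v{s}ek's partition theorem with crossing exponent $r^{2/3}$. That machinery, even in its symmetric two-level form, produces a biclique cover of size $O(n^{2d/(d+1)})$ for points versus halfspaces in $\mathbb{R}^d$; with $d=3$ this is $O(n^{3/2})$, not $O(n^{4/3})$. The paper itself says exactly this in the remark following Theorem~\ref{thm:point+halfspace}: lifting planar disks to halfspaces in $\mathbb{R}^3$ yields the $O(n^{3/2}\polylog n)$ bound, and that is the bound for \emph{arbitrary} disks. Your last paragraph correctly senses that congruence must be exploited, but asserting that the halfspaces ``form a two-parameter family'' does not lower the crossing number of a three-dimensional partition tree, and you provide no alternative mechanism that would.

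The argument that actually reaches $O(n^{4/3})$ stays in the plane. Two congruent circles intersect in at most two points, so congruent circles behave like pseudo-lines: a $(1/r)$-cutting of $n$ such circles has $O(r^2)$ cells, and the recursive biclique construction then mirrors the point--line case in $\mathbb{R}^2$, giving size and time $O(n^{4/3}\log n)$. Equivalently, in your lifted picture all the points lie on the paraboloid, so the instance is intrinsically two-dimensional; but exploiting that requires building the partitions with respect to the planar curves, not invoking the off-the-shelf $\mathbb{R}^3$ halfspace crossing bound you cite.
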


\begin{theorem}[Implicit in~\cite{Matousek93}; see also~{\cite[Section 4.3]{ae-survey}}]
  \label{thm:point+halfspace}
  Let $P$ be a set of at most $n$ points and let $H$ be a set of at
  most $n$ halfspaces in~$\Reals^d$, for constant~$d$.  We can compute
  an edge-disjoint compact representation of the incidence
  graph~$I(P,H)$ of size $O(n^{\frac{2d}{d+1}} \polylog n)$ in time
  $O(n^{\frac{2d}{d+1}} \polylog n)$.
\end{theorem}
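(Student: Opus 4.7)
The plan is to invoke Matou\v{s}ek's simplicial partition theorem, which states that for any parameter $1 \leq r \leq n$, any point set $P$ of size $n$ in $\Reals^d$ can be partitioned into $O(r)$ parts $(P_1,\Delta_1),\ldots,(P_t,\Delta_t)$ with $|P_i|\leq n/r$ and $P_i\subseteq\Delta_i$, such that every hyperplane crosses at most $O(r^{1-1/d})$ of the simplices $\Delta_i$. Iterating this construction yields a \emph{partition tree} whose internal nodes store canonical subsets of $P$ along with enclosing simplices, and whose leaves carry constantly many points.

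Given the partition tree, I would batch-process the halfspaces of $H$ against it. For each $h\in H$, walk the tree top-down; at a node with children $(P_i,\Delta_i)$, (i) if $\Delta_i\subseteq h$, record $h$ at the child node $v_i$; (ii) if $\Delta_i\cap h=\emptyset$, prune the branch; (iii) if the bounding hyperplane of $h$ crosses $\Delta_i$, recurse into $v_i$. After every halfspace has been processed, output one compact-representation pair $(P_v, H_v)$ per tree node $v$, where $H_v$ is the set of halfspaces that recorded $v$ as their canonical subset. The resulting collection covers exactly the incidence graph $I(P,H)$, and it is edge-disjoint because each incidence $(p,h)$ is captured precisely at the shallowest ancestor $v$ of $p$ whose simplex lies inside $h$.

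The size and running time are bounded by the cost of the batched traversal. At each level of the tree, the crossing guarantee contributes $O(n\cdot r^{1-1/d})$ (halfspace, node) visits, spread over $O(r)$ children each inheriting $n/r$ points, and the classification of a child simplex versus a halfspace is $O(1)$. Setting up the recurrence $T(m,n) = O(r)\,T(m/r,\,n\cdot r^{-1/d}) + O(m+n)$, and choosing the branching parameter $r$ that equalises the two sources of work — $r^d \approx m/n$ at each step — reproduces the bound $O(n^{2d/(d+1)}\polylog n)$ both for the total size $\sum_v(|P_v|+|H_v|)$ and for the running time.

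The main obstacle is the careful tuning of the branching parameters across the $O(\log n)$ levels of recursion: one must verify that the geometric series governing the recurrence actually collapses to the claimed exponent $2d/(d+1)$, control the polylogarithmic factors arising from the depth of the tree and the slack in Matou\v{s}ek's partition theorem (which produces $r^{1-1/d+\varepsilon}$ crossings in some formulations), and handle the base case of very small subproblems with a trivial representation. This optimisation is precisely the batched halfspace range searching analysis of Matou\v{s}ek~\cite{Matousek93} that the theorem cites as its source; the size bound on the output compact representation then follows as a by-product of that analysis, since every (point set, halfspace set) pair produced corresponds to a canonical node visited during the batch.
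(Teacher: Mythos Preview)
The paper does not actually prove this theorem; it is stated with a citation to Matou\v{s}ek~\cite{Matousek93} and the Agarwal--Erickson survey, so there is no ``paper's own proof'' to compare against.  Your sketch therefore has to stand on its own, and while the ingredients (simplicial partitions, top-down traversal, canonical subsets, the edge-disjointness argument) are all correct, the quantitative analysis has a genuine gap.

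A partition tree built \emph{only on the point side} yields a biclique cover of size $O\!\bigl(n\,m^{1-1/d}+m\log m\bigr)$, not $O\!\bigl((mn)^{d/(d+1)}\bigr)$.  Concretely, with fan-out~$r$ and depth $K=\log_r m$, a single halfspace is recorded at $O(m^{1-1/d})$ nodes (the crossing number multiplies by $r^{1-1/d}$ per level, and at each crossed node up to~$r$ children may be fully contained), so $\sum_v|H_v|=O(n\,m^{1-1/d})$.  For $m=n$ this is $O(n^{2-1/d})$, which is strictly weaker than the claimed $O(n^{2d/(d+1)})$ for every $d\ge 2$ (e.g.\ $n^{3/2}$ versus $n^{4/3}$ in the plane).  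Your proposed balancing rule $r^d\approx m/n$ does not rescue this: when $m=n$ it gives $r=1$, and more generally the one-sided recurrence you wrote solves to the same $n\,m^{1-1/d}$ bound regardless of how~$r$ is chosen.

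What is missing is a \emph{second} partitioning step on the halfspace side --- either via duality (turn the subproblem's points into hyperplanes and apply a simplicial partition again) or via $(1/r)$-cuttings of the bounding hyperplanes --- and then an alternation or a single balanced two-level scheme.  This is exactly the extra ingredient in Matou\v{s}ek's batched algorithm and in Section~4.3 of the survey you cite; without it the exponent cannot drop below $2-1/d$.  Your final paragraph gestures at ``careful tuning,'' but the obstacle is not tuning a parameter within a one-sided scheme: it is that the scheme itself must be two-sided.
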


\subparagraph*{Remark:}
Points and balls of different sizes in $\Reals^d$ can be treated as
points and halfspaces in $\Reals^{d+1}$ using standard techniques.
For example, for~$d=2$, the bound on the size of the compact
representation and the running time to construct it is~$O(n^{3/2}
\polylog n)$.


\section{Geometric matching problems with integral supplies and demands}
\label{sec:integral}

Consider a geometric matching problem for a set $P$ of points and a
set $R$ of geometric ranges, with $n$ objects in total.  Each point
$p\in P$ has a supply $s_p>0$ and each range $r\in R$ has a demand
$d_r>0$.  

Assume that we have a compact representation $\{ (P_i,R_i)\mid i\in
I\}$ of size~$\sigma$ for the incidence graph~$I(P,R)$.  We are going
to reduce the problem of deciding whether there is a satisfying
matching to a max-flow problem in a graph whose size is roughly the
size of the compact representation.

The max-flow instance is constructed as follows (see
Figure~\ref{fig:example2} for an example): Initially we have the
vertices $P\sqcup R$ and no edges.  We then add the following:
\begin{di}
\item We add a new vertex $s$ and, for each $p\in P$, 
  we add the directed edge $(s,p)$ with capacity~$s_p$.
\item We add a new vertex $t$ and, for each range $r\in R$, 
  we add the directed edge $(r,t)$ with capacity~$d_r$.
\item For each $i\in I$, we add a new vertex~$v_i$, 
  directed edges $(p,v_i)$ for all $p\in P_i$,
  and directed edges $(v_i,r)$ for all~$r\in R_i$. 
  These edges have capacity~$\mu = \min\bigl\{\sum_{p \in P} s_p, \sum_{r \in R} d_r \bigr\}$.
  We refer to the set of edges added in this step as~$E_i$.
\end{di}
This finishes the description of the $s$-$t$ max-flow instance, which
we denote by~$G$.  This instance depends on~$P$, $R$, the demands, the
supplies, and the compact representation of~$I(P,R)$ being used.

\begin{figure}
  \centerline{\includegraphics[scale=1.1,page=2]{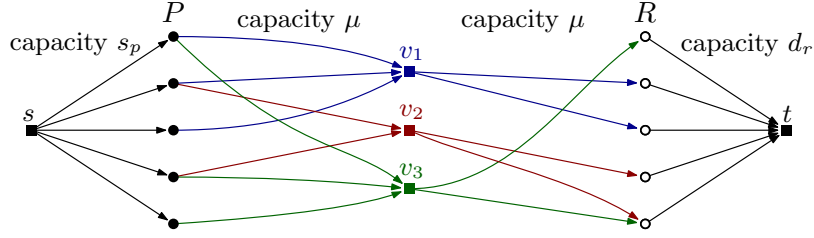}}
  \caption{The max-flow instance $G$ for the compact representation of Figure~\ref{fig:example1}.}
  \label{fig:example2}
\end{figure}

The instance we constructed has $2+|P|+|R|+|I|= 2+n+|I|=O(\sigma)$
vertices and $|P|+|R|+\sum_{i\in I} (|P_i|+|R_i|)=O(\sigma)$ edges.
The time to construct the instance is also $O(\sigma)$, assuming that
the compact representation is already available.  The maximum capacity
in the instance is~$\mu$.

\begin{restatable}{lemma}{flowtomatching}
  \label{le:reduction}
  There is an $s$-$t$ flow of value~$\eta$ in $G$ if and only if there
  exists a matching of value~$\eta$. Moreover, from a flow of
  value~$\eta$ we can recover a matching of value~$\eta$ and
  size~$O(\sigma)$ in~$O(\sigma)$ time.
\end{restatable}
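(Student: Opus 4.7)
I would prove the equivalence by explicit conversions in both directions, and obtain the size and time bounds along the way.

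For the easy direction, given a matching $\mathcal{A}$ of value~$\eta$, I route flow through $G$ as follows. Each triple $(p,r,a_{pr})\in\mathcal{A}$ corresponds to an edge $pr$ of~$I(P,R)$, so by definition of the compact representation there exists some index $i(p,r)\in I$ with $p\in P_{i(p,r)}$ and $r\in R_{i(p,r)}$; I push $a_{pr}$ units along the path $s\to p\to v_{i(p,r)}\to r\to t$. The supply constraint $\sum_r a_{pr}\le s_p$ makes the flow fit through $(s,p)$, the demand constraint makes it fit through $(r,t)$, and the internal edges have capacity $\mu\ge\eta\ge a_{pr}$. The resulting flow has value $\sum_{(p,r,a_{pr})}a_{pr}=\eta$.

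For the harder direction, given an $s$-$t$ flow $f$ of value $\eta$, I would process each auxiliary vertex $v_i$ separately. Flow conservation at $v_i$ gives $\sum_{p\in P_i}f(p,v_i)=\sum_{r\in R_i}f(v_i,r)$, and a standard greedy single-vertex decomposition writes this common value as a sum of at most $|P_i|+|R_i|-1$ nonnegative quantities $a^{(i)}_{pr}$ supported on pairs $(p,r)\in P_i\times R_i$, satisfying $\sum_r a^{(i)}_{pr}=f(p,v_i)$ for every $p\in P_i$ and $\sum_p a^{(i)}_{pr}=f(v_i,r)$ for every $r\in R_i$. I then aggregate $a_{pr}:=\sum_{i:\,p\in P_i,\,r\in R_i} a^{(i)}_{pr}$ and output the triples with $a_{pr}>0$. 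Every pair $(p,r)$ used lies in some $K(P_i,R_i)$, hence is an edge of $I(P,R)$; the supply and demand constraints follow from flow conservation at $p$ and at $r$, since $\sum_r a_{pr}=\sum_i f(p,v_i)=f(s,p)\le s_p$ and symmetrically for each $r$; and the total value equals the flow crossing the middle layer, which is~$\eta$.

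For the size and running time, the decomposition at $v_i$ produces at most $|P_i|+|R_i|$ triples, so summing over $i\in I$ gives at most $\sigma$ triples before any merging; the final aggregation into the $a_{pr}$ can only decrease the count. Each per-vertex decomposition runs in time linear in the local degree $|P_i|+|R_i|$, so the whole procedure costs $O(\sigma)$. The main obstacle I anticipate is keeping the size bound tight: a naive global path decomposition of $f$ could produce many more distinct $(p,r)$ pairs than $\sigma$, so the argument depends on decomposing locally at each $v_i$ and on observing that duplicates across different indices $i$ can only be merged, never created.
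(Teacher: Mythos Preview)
Your proposal is correct and matches the paper's proof essentially step for step: both directions use the same constructions (route each triple along an $s\to p\to v_i\to r\to t$ path for some $i$ with $p\in P_i$, $r\in R_i$; decompose the flow greedily at each $v_i$ into at most $|P_i|+|R_i|-1$ triples and take the union), and the size and time bounds follow identically. The only detail the paper makes explicit that you gloss over is that the final aggregation of duplicate $(p,r)$ pairs across different indices~$i$ is carried out in $O(\sigma)$ time via radix sort on the first two coordinates.
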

\begin{proof}
  Assume first that there is a matching ${\cal A}$ of value $\eta$.
  We define a flow $f\colon E(G)\rightarrow \Reals_{\ge 0}$ as
  follows.  Initially, we set the flow $0$ at all edges.  For each
  $(p,r,a_{pr})\in {\cal A}$, we select an index $i\in I$ such that
  $(p,r)\in P_i \times R_i$. Such an index $i$ exists because $\{
  (P_i,R_i)\mid i\in I\}$ is a compact representation of the incidence
  graph $I(P,R)$.  We then increase the flow value at each of the
  edges $(s,p)$, $(p,v_i)$, $(v_i,r)$ and $(r,t)$ by~$a_{pr}$.  These
  edges form a path from $s$ to $t$.  It is easy to see that at each
  vertex of~$G$, different from~$s$ and~$t$, we have conservation of
  flow and the capacities of the network are respected.  Therefore we
  have defined a valid flow $f$ in $G$ whose value is
  \[
  \sum_{p\in P} f(s,p) ~=~ 
  \sum_{p\in P} \bigg( \sum_{r\in R: ~(p,r,a_{pr})\in {\cal A}} a_{pr}\bigg) ~=~ 
  \sum_{(p,r,a_{pr})\in {\cal A}} a_{pr} ~=~ 
  \eta.
  \]
  
  Assume now that we have an $s$-$t$ flow $f\colon E(G)\rightarrow
  \Reals_{\ge 0}$ in $G$ with value $\eta$. This means that
  $\eta=\sum_{p\in P} f(s,p)= \sum_{r\in R} f(r,t)$.  For each $i\in
  I$, let $\eta_i$ be the flow through $v_i$, which is $\sum_{p\in P_i}
  f(p,v_i)= \sum_{r\in R_i} f(v_i,r)$.  Since the edges entering into
  the vertex set~$\{ v_i\mid i\in I\}$ form an $s$-$t$ cut, and the
  edges leaving that vertex set also form an $s$-$t$ cut, we have
  \[
  \eta~=~ \sum_{i\in I} \sum_{p\in P_i} f(p,v_i) ~=~ \sum_{i\in I} \sum_{r\in R_i} f(v_i,r) 
  ~=~ \sum_{i\in I} \eta_i.
  \]
  
  Fix any $i\in I$. We construct a matching ${\cal A}_i$ as follows.
  We initialize values $\ell(p)= f(p,v_i)$ for each $p\in P_i$ and
  $\ell(r)= f(v_i,r)$ for each $r\in R_i$, set $\tilde\eta_i=\eta_i$
  and empty matching ${\cal A}_i=\emptyset$.  While $\tilde\eta_i> 0$,
  we select one $p\in P_i$ with $\ell(p)>0$ and one $r\in R_i$ with
  $\ell(r)>0$. Let $\delta=\min\{ \ell(p), \ell(r)\}$.  We then add
  $(p,r,\delta)$ to ${\cal A}_i$, and decrease the values $\ell(p),
  \ell(r), \tilde\eta_i$ by $\delta$.  Throughout the process we
  maintain the invariants
  \begin{align*}
    \ell(u) ~&\ge~ 0 ~~~~\forall u\in P_i\cup R_i,\\
    \eta_i ~&=~ \tilde \eta_i + \sum_{(p,r,a_{pr})\in {\cal A}_i} a_{pr},\\
    \tilde\eta_i ~&=~ \sum_{p\in P_i} \ell(p) ~=~ \sum_{r\in R_i} \ell(r),\\
    \forall p\in P_i:~~~~f(p,v_i) ~&=~ \ell(p) + \sum_{r\in R_i:~ (p,r,a_{pr})\in {\cal A}_i} a_{pr},\\
    \forall r\in R_i:~~~~f(v_i,r) ~&=~ \ell(r) + \sum_{p\in P_i:~ (p,r,a_{pr})\in {\cal A}_i} a_{pr}.
  \end{align*}
  Therefore, when we finish, $\tilde \eta_i=0$ and $\ell(u)=0$ for each
  $u\in P_i\cup R_i$.  We then have at the end
  \begin{align*}
    \eta_i ~&=~ \sum_{(p,r,a_{pr})\in {\cal A}_i} a_{pr},\\
    \forall p\in P_i:~~~~f(p,v_i) ~&=~ \sum_{r\in R_i:~ (p,r,a_{pr})\in {\cal A}_i} a_{pr},\\
    \forall r\in R_i:~~~~f(v_i,r) ~&=~ \sum_{p\in P_i:~ (p,r,a_{pr})\in {\cal A}_i} a_{pr}.
  \end{align*}
  At each iteration we get at least one new element $u\in P_i\cup R_i$ with $\ell(u)=0$.
  Therefore we have $|{\cal A}_i|\le |P_i|+|R_i|-1$ because the last subset of $u \in P_i \cup R_i$ such that $\ell(u)$ turns zero must have cardinality at least two.  Consider ${\cal A} = \bigcup_{i\in I}{\cal A}_i$ which has a value of
  \[
  \sum_{(p,r,a_{pr})\in {\cal A}} a_{pr} ~=~
  \sum_{i\in I} \sum_{(p,r,a_{pr})\in {\cal A}_i} a_{pr} ~=~ 
  \sum_{i\in I} \eta_i ~=~	\eta.
  \]
  For each $p\in P$, we use that the directed edge $(s,p)$ has capacity $s_p$
  to conclude that 
  \[
  \sum_{(p,r,a_{pr})\in {\cal A}} a_{pr} ~=~
  \sum_{i\in I} \left( \sum_{(p,r,a_{pr})\in {\cal A}_i} a_{pr}\right) ~=~
  \sum_{i\in I} f(p,v_i) ~=~
  f(s,p) ~\le~ s_p.
  \]
  A similar argument using that the capacity of the directed edge $(r,t)$ is $d_r$, 
  for all $r\in R$, shows that
  \[
  \forall r\in R:~~~~ \sum_{p\in P:~ (p,r,a_{pr})\in {\cal A}} a_{pr} ~=~ f(r,t)~\le~ d_r.
  \]
  
  To bound the size of ${\cal A}$ we note that
  \[
  |{\cal A}| ~=~ \sum_{i\in I}|{\cal A}_i| ~\le~ \sum_{i\in I}\bigl(|P_i|+|R_i|-1\bigr)
  ~\le~ \sigma.	
  \]
  
  If the compact representation is edge-disjoint, then ${\cal A}$ is
  already a matching. Otherwise, it can happen that there are multiple
  triples of the form $(p,r,\cdot)$ for the same $(p,r)\in P\times
  R$. We handle this by sorting the elements of ${\cal A}$ with
  respect to the first two entries and merging all elements for a
  pair~$(p, r)$.  This takes $O(n+|{\cal A}|)= O(\sigma)$ time using
  radix sort.
\end{proof}

\begin{theorem}
  \label{thm:integral}
  Consider a geometric matching problem for a set $P$ of at most $n$
  points and a set $R$ of at most $n$ ranges such that the demands and
  the supplies are integral.  Let $\mu$ be the target value of the
  instance.  Assume that we have a compact representation of the
  incidence graph $I(P,R)$ of size $\sigma$.  Then we can compute a
  maximum (many-to-many) matching of size $O(\sigma)$ in
  $O(\sigma^{1+\eps} \log \mu)$ time with high probability, for any
  constant~$\eps>0$.
\end{theorem}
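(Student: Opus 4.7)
The plan is to combine the reduction of Lemma~\ref{le:reduction} with the near-linear time maximum-flow algorithm of Chen et al.~\cite{ChenKLPGS22} in a black-box fashion.

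First, I would construct the $s$-$t$ max-flow instance $G$ exactly as described in the paragraphs preceding Lemma~\ref{le:reduction}, using the given compact representation $\{(P_i,R_i)\mid i\in I\}$. As already noted there, $G$ has $O(\sigma)$ vertices and $O(\sigma)$ edges, and it can be written down in $O(\sigma)$ time. All edge capacities are integers: the capacities $s_p$ and $d_r$ are integers by hypothesis, and the capacity $\mu$ on the middle layer is the integer $\min\{\sum_p s_p,\sum_r d_r\}$. In particular, the largest capacity in $G$ is at most $\mu$.

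Next, I would invoke the algorithm of Chen et al.~\cite{ChenKLPGS22}, which computes a maximum $s$-$t$ flow in a directed graph with $m$ edges and integer capacities bounded by $U$ in time $m^{1+o(1)}\log U$ with high probability. Applied to $G$ with $m=O(\sigma)$ and $U\le \mu$, this gives running time $O(\sigma^{1+\eps}\log\mu)$ for any constant $\eps>0$, succeeding with probability $1-\sigma^{-O(1)}$. Because all capacities of $G$ are integral, the returned max-flow value $\eta^*$ is also integral; moreover the flow itself can be taken to be integer valued, although this is not needed for the statement of the theorem.

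Finally, I would apply Lemma~\ref{le:reduction} in both directions to certify optimality. Let $\eta^*$ be the max-flow value in $G$. Lemma~\ref{le:reduction} produces in $O(\sigma)$ time a matching $\mathcal{A}$ of value $\eta^*$ and size $O(\sigma)$. Conversely, if some matching had value $\eta'>\eta^*$, the same lemma would yield an $s$-$t$ flow in $G$ of value $\eta'$, contradicting the maximality of $\eta^*$. Hence $\mathcal{A}$ is a maximum matching. Adding up the costs, the total running time is $O(\sigma)+O(\sigma^{1+\eps}\log\mu)+O(\sigma)=O(\sigma^{1+\eps}\log\mu)$ with high probability, which is exactly the claimed bound.

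There is no real obstacle here: the content of the theorem has been shifted to Lemma~\ref{le:reduction} and to the black-box max-flow algorithm, so the only thing to check is the bookkeeping on sizes and capacities of $G$, which is immediate from the construction.
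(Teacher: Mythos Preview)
Your proposal is correct and follows essentially the same approach as the paper: build the $O(\sigma)$-size max-flow instance~$G$, invoke Chen et al.\ on it, and use Lemma~\ref{le:reduction} to translate the flow into a matching of size~$O(\sigma)$. Your write-up is in fact a bit more explicit than the paper's own proof, which leaves the application of Lemma~\ref{le:reduction} (and the optimality argument via its converse direction) implicit.
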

\begin{proof}
  We build the max-flow instance associated with the compact
  representation in $O(\sigma)$ time.  Note that the resulting
  instance has $O(\sigma)$ vertices, $O(\sigma)$ edges, and maximum
  capacity~$\mu$.  We solve the $s$-$t$ max-flow problem in
  $O(\sigma^{1+o(1)} \log \mu)$ time with high probability using the
  algorithm of Chen et al.~\cite[Corollary 1.3]{ChenKLPGS22}.  This
  running time can be upper bounded\footnote{We phrase the time bound
  in this way because this expression is commonly used in
  computational geometry.} by~$O(\sigma^{1+\eps} \log \mu)$ for any
  constant~$\eps>0$.
\end{proof}


\subsection{Maximum and perfect matchings in intersection graphs}

We next turn our attention to the setting where we want to compute a
maximum one-to-one matching in a bipartite graph that has a compact
representation. A particular case of this is deciding whether the
graph has a perfect matching (and constructing one).

\newcommand{\maximumMatchingTheorem}{\begin{restatable}{theorem}{maximumMatching}
    \label{thm:maximum_matching}
    Let $P$ be a set of at most $n$ points and let $R$ be a set of at
    most $n$ ranges.  Assume that we have a compact representation of
    the incidence graph $I(P,R)$ of size $\sigma$.  We can compute a
    maximum one-to-one matching in~$I(P,R)$ in $O(\sigma^{1+\eps})$
    time with high probability, for any constant $\eps>0$.
  \end{restatable}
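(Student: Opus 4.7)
The plan is to invoke Theorem~\ref{thm:integral} with unit supplies $s_p=1$ for every $p\in P$ and unit demands $d_r=1$ for every $r\in R$. With these choices $\mu\le n$, so Theorem~\ref{thm:integral} produces a maximum many-to-many matching of value $\eta$ in time $O(\sigma^{1+\eps}\log n)=O(\sigma^{1+\eps})$. I would first argue that $\eta$ equals the size $\eta^{\star}$ of a maximum one-to-one matching in $I(P,R)$: every one-to-one matching is also a valid many-to-many matching under unit bounds, so $\eta\ge\eta^{\star}$; conversely, because the incidence matrix of a bipartite graph is totally unimodular, the LP relaxation of bipartite matching is integral, yielding $\eta\le\eta^{\star}$.

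The many-to-many matching returned by Theorem~\ref{thm:integral} may assign fractional values $a_{pr}$, so the next step is to convert it into a one-to-one matching of the same value. I would perform this conversion by a flow-rounding step carried out directly on the auxiliary network $G$ constructed in Section~\ref{sec:integral}, following the approach of Goel, Kapralov and Khanna~\cite{GoelKK13} and of M\k{a}dry~\cite{Madry13}. Because the edges incident to $s$ and to $t$ have unit capacity, flow conservation forces the subgraph of strictly fractional edges to contain a cycle; pushing flow around such a cycle eliminates at least one fractional edge while preserving the total flow value. Once the flow in $G$ is integral, the extraction in Lemma~\ref{le:reduction} returns a genuine one-to-one matching of size $\eta$, since the unit capacities on the $(s,p)$- and $(r,t)$-edges guarantee that no point or range is used more than once.

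The main obstacle is making the rounding fit within the $O(\sigma^{1+\eps})$ budget. A naive cycle-cancellation schedule can take $\Omega(\sigma^2)$ time in the worst case, but the dynamic-tree-based rounding of~\cite{GoelKK13,Madry13} runs in $\tilde O(\sigma)$ time on graphs with $O(\sigma)$ edges and vertices, and is therefore dominated by the cost of the max-flow call. Combining the two steps yields the claimed bound of $O(\sigma^{1+\eps})$, holding with high probability as inherited from Theorem~\ref{thm:integral}.
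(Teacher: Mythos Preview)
Your proposal is correct in outline but takes a different route from the paper, and the difference is worth noting because it affects which black box you are really invoking.

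The paper first runs Theorem~\ref{thm:integral} with unit supplies and demands to learn the value~$\eta$, but it does \emph{not} attempt to round that flow directly. Instead it builds an auxiliary incidence graph~$I(P',R')$ by adding $|R|-\eta$ dummy points and $|P|-\eta$ dummy ranges, together with the complete bipartite graphs $K(P,R_0)$, $K(P_0,R)$, $K(P_0,R_0)$; this graph has a compact representation of size~$O(\sigma)$ and is guaranteed to admit a \emph{perfect} matching. It then invokes Lemma~\ref{lem:perfect_matching}, whose rounding step is exactly Theorem~5 of Goel--Kapralov--Khanna, which converts a doubly stochastic matrix (a fractional \emph{perfect} matching) into a perfect matching. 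Removing the dummy edges yields the maximum matching in~$I(P,R)$.

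Your approach skips the dummy-vertex reduction and rounds the fractional max-flow on the five-layer graph~$G$ directly. This is sound: once you know~$\eta$ is an integer, the fractional edges of the flow do form cycles, cycle cancelling preserves the value, and an integral flow in~$G$ indeed yields a one-to-one matching via Lemma~\ref{le:reduction}. The caveat is that the tool you cite does not quite match what you need. Goel--Kapralov--Khanna's fast rounding is stated for fractional perfect matchings (doubly stochastic matrices), not for arbitrary $s$--$t$ flows of integral value; the paper's detour through~$I(P',R')$ is precisely what puts the problem into the form where~\cite{GoelKK13} applies verbatim. Your direct route needs a general flow-rounding primitive (link-cut-tree based, $O(m\log n)$ time), which is known but is not literally the content of~\cite{GoelKK13,Madry13}. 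With that primitive in hand, your argument is slightly leaner (one max-flow call instead of the two implicit in the paper's Lemma~\ref{lem:perfect_matching} plus Theorem~\ref{thm:integral}); the paper's version has the advantage of matching the cited result exactly.
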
}

This is essentially a matching problem with unit supplies and demands.
Standard integrality results imply that the value of the matching is
the size of the maximum one-to-one matching.  However, we have to be
careful because the algorithm in Theorem~\ref{thm:integral} could
return a \emph{non-integral} matching.\footnote{It is not
clear to us whether the algorithms of Chen et al.~\cite{ChenKLPGS22}
guarantee that the flow is integral when the capacities are integral.
In any case, this potential problem appears if using any max-flow
algorithm that does not guarantee integrality.}
We next explain how to convert a maximum non-integral matching into a
maximum one-to-one matching; this is mostly an adaptation of the
method by M{\k{a}}dry~\cite[Section 8 of the full version]{Madry13},
which uses the algorithm of Goel, Kapralov and Khanna~\cite{GoelKK13}
to compute perfect matchings in regular bipartite graphs.

We first discuss the simpler case of perfect matchings.

\begin{lemma}
  \label{lem:perfect_matching}
  Let $P$ be a set of $n$ points and let $R$ be a set of $n$ ranges.
  Assume that we have a compact representation of the incidence graph
  $I(P,R)$ of size $\sigma$.  We can decide whether $I(P,R)$ has a
  perfect matching (and return one if it exists) in $O(\sigma^{1+\eps})$ time with
  high probability, for any constant $\eps>0$.
\end{lemma}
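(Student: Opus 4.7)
The plan is to reduce the perfect-matching question to a many-to-many matching instance with unit supplies and demands, apply Theorem~\ref{thm:integral}, and then post-process the (possibly fractional) output to obtain an integral perfect matching.

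First, I would set $s_p = 1$ for every $p \in P$ and $d_r = 1$ for every $r \in R$, so that the target value is $\mu = n$. By the integrality of the bipartite matching polytope (equivalently, K\"onig's theorem), $I(P,R)$ admits an integral perfect matching if and only if there exists a (possibly fractional) many-to-many matching of value $n$. Applying Theorem~\ref{thm:integral} yields, in $O(\sigma^{1+\eps} \log n) = O(\sigma^{1+\eps})$ time and with high probability, a matching $\mathcal{A}$ of maximum value and size $O(\sigma)$. If this value is less than $n$ we report that no perfect matching exists; otherwise $\mathcal{A}$ is a fractional perfect matching, i.e.\ $\sum_{r:\,(p,r,a_{pr})\in\mathcal{A}} a_{pr} = 1$ for every $p \in P$ and $\sum_{p:\,(p,r,a_{pr})\in\mathcal{A}} a_{pr} = 1$ for every $r \in R$.

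It then remains to round $\mathcal{A}$ to an integral perfect matching in the bipartite support graph $H$ on $P \sqcup R$ whose edges are $\{ pr : (p,r,a_{pr}) \in \mathcal{A},\ a_{pr}>0 \}$; note that $|E(H)| = O(\sigma)$. Following the strategy of M\k{a}dry~\cite[Section~8 of the full version]{Madry13}, I would augment $H$, using the fractional values in $\mathcal{A}$ as a guide, with a carefully chosen set of auxiliary vertices and edges so as to produce a $d$-regular bipartite graph $H'$ whose edge count is still $O(\sigma)$. Running the algorithm of Goel, Kapralov, and Khanna~\cite{GoelKK13} on $H'$ then returns a perfect matching of $H'$ in time near-linear in $|E(H')|$; restricting this matching to edges of $H$ (and hence of $I(P,R)$) yields the desired integral perfect matching.

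The main obstacle is the final rounding step. A direct cycle-cancellation on $\mathcal{A}$ can require $\Theta(\sigma)$ augmentations of cost up to $\Theta(\sigma)$ each, giving $\Omega(\sigma^2)$ time in the worst case. Keeping the total cost within $O(\sigma^{1+\eps})$ therefore requires that the regularization produces an $H'$ of size only $O(\sigma)$ and that the Goel--Kapralov--Khanna routine is invoked just once; this is exactly what the M\k{a}dry-style reduction combined with~\cite{GoelKK13} provides, and is where the bulk of the technical care lies.
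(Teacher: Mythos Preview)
Your reduction to a unit-supply/unit-demand instance, the invocation of Theorem~\ref{thm:integral}, and the recognition that the output is a fractional perfect matching supported on $O(\sigma)$ edges are exactly what the paper does.

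The only difference is in the rounding step, and here you work harder than necessary. The paper does not go through M\k{a}dry's regularization; it simply invokes \cite[Theorem~5]{GoelKK13}, which \emph{directly} takes a doubly-stochastic matrix (a fractional perfect matching) supported on $m$ edges and returns an integral perfect matching in $O(m + n\log^2 n)$ expected time. No auxiliary regular graph is needed. The M\k{a}dry-style reduction is reserved in the paper for the \emph{maximum} (non-perfect) matching case in Theorem~\ref{thm:maximum_matching}, and even there it is a reduction from maximum matching to perfect matching rather than a regularization. Your phrasing ``auxiliary vertices and edges \ldots\ restricting this matching to edges of $H$'' is also a little worrying: if you genuinely add new vertices, a perfect matching of $H'$ need not restrict to a perfect matching of $H$; the clean constructions here duplicate edges (yielding a regular multigraph on the \emph{same} vertex set), not vertices. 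In short, your plan is correct in spirit, but the final step can be replaced by a single black-box call to \cite[Theorem~5]{GoelKK13}, which both simplifies the argument and sidesteps the restriction issue.
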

\begin{proof}
  We consider the matching problem where each point $p\in P$ has
  unit supply and each range $r\in R$ has unit demand. The matching
  problem has target value $\mu=n$.  We use Theorem~\ref{thm:integral} to
  decide whether the matching instance can be satisfied in
  $O(\sigma^{1+\eps/2} \log \mu)= O(\sigma^{1+\eps/2} \log n)=
  O(\sigma^{1+\eps})$ time with high probability, for any constant
  $\eps>0$.

  If there is no satisfying matching, then $I(P,R)$ has no perfect
  matching.  If it has a satisfying matching, we get from
  Theorem~\ref{thm:integral} one of size $O(\sigma)$. This satisfying
  matching is a so-called fractional perfect matching, which can
  also be interpreted as a doubly-stochastic matrix on $P\times R$.

  Goel, Kapralov and Khanna~\cite[Theorem 5]{GoelKK13} show how to
  convert a fractional perfect matching into a perfect matching in
  $O(m+n\log^2 n)$ expected time, where $m$ is the number of edges
  appearing in the fractional matching and $n$ is the number of
  vertices. In our setting we have $O(\sigma)$ edges, which means that
  we spend $O(\sigma + n\log^2 n)= O(\sigma^{1+\eps})$ expected time
  to obtain the perfect matching from the satisfying matching.  To
  guarantee high probability, we can run $O(\log n)$ copies of the
  procedure in parallel until the first on finishes, or use the
  version by Goel, Kapralov and Khanna~\cite[Section 2.3]{GoelKK13}
  that guarantees high probability.
\end{proof}

\maximumMatchingTheorem
\begin{proof}
  We consider the matching problem where each point $p\in P$ has
  unit supply and each range $r\in R$ has unit demand. For this
  matching problem we have $\mu=\min\{ |P|,|R|\}$.
	
  We use Theorem~\ref{thm:integral} to find a maximum matching in
  $O(\sigma^{1+\eps/2} \log \mu)= O(\sigma^{1+\eps/2} \log n)=
  O(\sigma^{1+\eps})$ time with high probability, for any constant
  $\eps>0$.  Let $\eta$ be the value of the matching. Because the
  demands and the supplies are integral, $\eta$ is integral and it is
  the size of the maximum matching in $I(P,R)$.
  \begin{figure}
    \centerline{\includegraphics[scale=1.1,page=4]{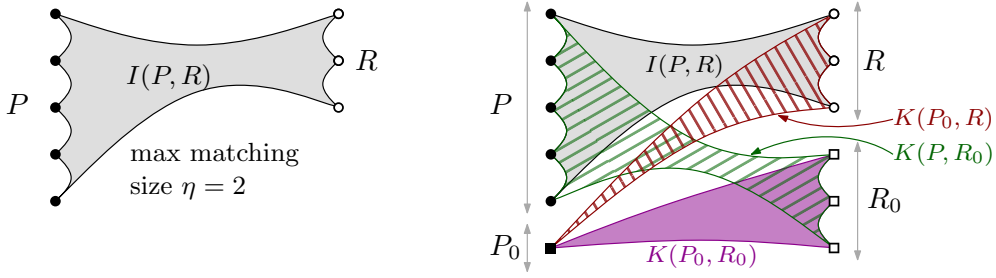}}
    \caption{Reducing a bipartite maximum matching instance in $I(P,R)$ (left)
      to a bipartite perfect matching instance in $I(P',R')$ (right).}
    \label{fig:maximum}
  \end{figure}
  
  We now create an abstract incidence graph, as follows; see
  Figure~\ref{fig:maximum}. We create a set $P_0$ of $|R|-\eta$ new
  vertices and a set $R_0$ of $|P|-\eta$ new vertices.  The bipartite
  classes are $P'=P\cup P_0$ and $R'=R\cup R_0$. Both have exactly
  $|P|+|R|-\eta$ elements.  Besides the incidences of $I(P,R)$, we
  declare that there is an incidence between each element of~$P_0$ and
  each element of~$R'$, and that there is an incidence between each
  element of~$P'$ and each element of~$R_0$.  Equivalently, we add the
  bipartite graphs $K(P,R_0)$, $K(P_0,R)$ and $K(P_0,R_0)$ to~$I(P,
  R)$.  This finishes the description of the new, abstract incidence
  graph~$I(P',R')$.  Note that $I(P',R')$ has a compact representation
  of size $\sigma+(|P|+|R_0|)+(|P_0|+|R|)+(|P_0|+|R_0|) = \sigma +
  O(n) = O(\sigma)$, because we are just adding three complete
  bipartite graphs on at most $2n$ vertices.
	
  The graph $I(P',R')$ has a perfect matching, that is, a matching of
  size $|P|+|R|-\eta$.  Indeed, we can extend the maximum matching of
  $I(P,R)$ with value $\eta$ to a matching of value $|P|+|R|-\eta$ by
  sending any remaining supply from $p\in P$ to some of the new ranges
  of $R_0$.  Similarly, each unsatisfied demand from $r\in R$ can be
  provided by some of the new points in $P_0$. Since we have the
  complete bipartite graphs $K(P,R_0)$, $K(P_0,R)$, we can do this
  arbitrarily. Finally, any remaining supply from $P_0$ and demand
  from $R_0$ can be assigned arbitrarily within $K(P_0,R_0)$.  This
  shows that there is a matching for~$I(P',R')$ with
  value~$|P|+|R|-\eta=|P'|=|R'|$, and therefore $I(P',R')$ has a
  perfect matching.

  We use Lemma~\ref{lem:perfect_matching} for~$I(P',R')$ and we get a
  perfect matching with~$|P|+|R|-\eta$ edges.  We remove the edges
  that are incident to~$P_0$ and~$R_0$, and are left with
  $(|P|+|R|-\eta)- |P_0|-|R_0| = (|P|+|R|-\eta)- (|R|-\eta) -
  (|P|-\eta) =\eta$ edges that connect~$P$ and~$R$. This is a maximum
  one-to-one matching in~$I(P,R)$.
\end{proof}

We can now combine Theorem~\ref{thm:maximum_matching} and
Theorem~\ref{thm:point+rectangle} to obtain the following.
\begin{corollary}
  \label{coro:maximum+points+rectangles}
  Let $P$ be a set of at most $n$ points and let $R$ be a set of at
  most $n$ axis-parallel boxes in $\Reals^d$, for constant $d$.  We
  can compute a maximum one-to-one matching in the incidence
  graph~$I(P,R)$ in time $O(n^{1+\eps})$ with high probability, for
  any constant~$\eps>0$.
\end{corollary}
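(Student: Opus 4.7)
The plan is to simply chain the two cited theorems and absorb the polylogarithmic factors into the $n^\eps$. First I would invoke Theorem~\ref{thm:point+rectangle} on the input $(P,R)$ to produce, in $O(n\log^d n)$ time, a compact representation of the incidence graph $I(P,R)$ of size $\sigma = O(n\log^d n)$. Since $d$ is a constant, this gives $\sigma = O(n^{1+o(1)})$; in particular, for every constant $\eps > 0$ we have $\sigma = O(n^{1+\eps/2})$ by choosing $n$ large enough.

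Next I would feed this compact representation into Theorem~\ref{thm:maximum_matching} with parameter $\eps'$ chosen small enough that $(1+\eps/2)(1+\eps') \le 1+\eps$; for instance, $\eps' = \eps/4$ works for all sufficiently small~$\eps$. The theorem computes a maximum one-to-one matching in $I(P,R)$ in $O(\sigma^{1+\eps'})$ time with high probability, and this bound becomes
\[
O(\sigma^{1+\eps'}) ~=~ O\bigl(n^{(1+\eps/2)(1+\eps')}\bigr) ~=~ O(n^{1+\eps}),
\]
as required. The preprocessing time $O(n\log^d n)$ is dominated by this term.

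There is no real obstacle here: the corollary is a mechanical combination of the two earlier results, and the only mild point to watch is to absorb the $\log^d n$ factor coming from the compact representation and the logarithmic slack in Theorem~\ref{thm:maximum_matching} into a single $n^\eps$. The high-probability guarantee is inherited directly from Theorem~\ref{thm:maximum_matching}, since the construction of the compact representation in Theorem~\ref{thm:point+rectangle} is deterministic.
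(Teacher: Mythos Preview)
Your proposal is correct and matches the paper's approach exactly: the paper simply states that the corollary follows by combining Theorem~\ref{thm:maximum_matching} with Theorem~\ref{thm:point+rectangle}, and you have spelled out precisely that combination, including the absorption of the $\log^d n$ factor into~$n^{\eps}$.
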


Theorem~\ref{thm:maximum_matching} and Theorem~\ref{thm:point+disk1} lead to the following.
\begin{corollary}
  \label{coro:maximum+points+disks}
  Let $P$ be a set of at most $n$ points and let $D$ be a set of at
  most $n$ congruent disks in the plane.  We can compute a maximum
  one-to-one matching in the incidence graph~$I(P,D)$ in
  time~$O(n^{4/3+\eps})$ with high probability, for any
  constant~$\eps>0$.
\end{corollary}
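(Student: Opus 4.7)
The plan is a direct combination of Theorem~\ref{thm:point+disk1} and Theorem~\ref{thm:maximum_matching}. First I would invoke Theorem~\ref{thm:point+disk1} on $P$ and $D$ to obtain, in time $O(n^{4/3}\log n)$, an edge-disjoint compact representation of the incidence graph $I(P,D)$ of size $\sigma = O(n^{4/3}\log n)$. Then I would feed this compact representation to the algorithm of Theorem~\ref{thm:maximum_matching}, which computes a maximum one-to-one matching in $I(P,D)$ in time $O(\sigma^{1+\eps'})$ with high probability, for any constant $\eps'>0$.

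The remaining step is bookkeeping with the exponent. Substituting $\sigma = O(n^{4/3}\log n)$ gives a running time of
\[
O\bigl((n^{4/3}\log n)^{1+\eps'}\bigr) ~=~ O\bigl(n^{4/3 + 4\eps'/3} \log^{1+\eps'} n\bigr).
\]
Given a target $\eps>0$ in the statement, I would choose $\eps'>0$ small enough that $4\eps'/3 < \eps/2$; then the polylogarithmic factor is absorbed into the remaining $n^{\eps/2}$ slack, and the total time, including the $O(n^{4/3}\log n)$ preprocessing for the compact representation, is $O(n^{4/3+\eps})$. The success probability is inherited from Theorem~\ref{thm:maximum_matching} and is high (at least $1 - n^{-O(1)}$).

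There is essentially no obstacle here: the result is a plug-and-play combination of a geometric incidence-decomposition bound with the generic matching algorithm established earlier. The only minor point to verify is that the size bound $\sigma = O(n^{4/3}\log n)$ is small enough that raising it to the power $1+\eps'$ still fits under $n^{4/3+\eps}$, which is immediate.
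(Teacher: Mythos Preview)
Your proposal is correct and matches the paper's own argument exactly: the corollary is stated as an immediate consequence of Theorem~\ref{thm:point+disk1} and Theorem~\ref{thm:maximum_matching}, and your exponent bookkeeping is precisely what is needed to absorb the $\log n$ factor into the $\eps$.
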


Theorem~\ref{thm:maximum_matching} and Theorem~\ref{thm:point+halfspace} lead to the following.
\begin{corollary}
  \label{coro:maximum+points+halfspaces}
  Let $P$ be a set of at most $n$ points and let $H$ be a set of at
  most $n$ halfspaces in $\Reals^d$, for constant $d$.  We can compute
  a maximum one-to-one matching in the incidence graph~$I(P,H)$ in
  time $O(n^{\frac{2d}{d+1}+\eps})$ with high probability, for any
  constant~$\eps>0$.
\end{corollary}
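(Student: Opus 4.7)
The plan is to combine the two results that the corollary is explicitly paired with, exactly as was done for the previous two corollaries on rectangles and congruent disks. First, I would invoke Theorem~\ref{thm:point+halfspace} on the input $(P, H)$ to construct, in time $O(n^{\frac{2d}{d+1}} \polylog n)$, an (edge-disjoint) compact representation of the incidence graph $I(P,H)$ whose size $\sigma$ satisfies $\sigma = O(n^{\frac{2d}{d+1}} \polylog n)$. Note that $d$ is constant, so the implicit exponent in the $\polylog$ factor is a fixed constant depending only on $d$.

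Next, I would feed this compact representation into Theorem~\ref{thm:maximum_matching}, which computes a maximum one-to-one matching in $I(P,H)$ in time $O(\sigma^{1+\eps'})$ with high probability, for any constant $\eps' > 0$. Substituting the bound on $\sigma$ gives a running time of
\[
O\!\bigl((n^{\frac{2d}{d+1}} \polylog n)^{1+\eps'}\bigr) ~=~ O\!\bigl(n^{\frac{2d}{d+1}(1+\eps')} \polylog n\bigr).
\]
The construction time for the compact representation, $O(n^{\frac{2d}{d+1}} \polylog n)$, is dominated by this bound.

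Finally, given any constant $\eps > 0$, I would pick $\eps' > 0$ small enough that $\frac{2d}{d+1}\eps' < \eps/2$; since $d$ is constant, such a choice exists. The remaining $\polylog n$ factor is then absorbed into the additional $n^{\eps/2}$ slack, yielding the claimed bound $O(n^{\frac{2d}{d+1}+\eps})$ with high probability. There is no real obstacle here beyond verifying the arithmetic on the exponent: the result is a straightforward composition of two black-box theorems already established in the paper, mirroring the proofs of Corollaries~\ref{coro:maximum+points+rectangles} and~\ref{coro:maximum+points+disks}.
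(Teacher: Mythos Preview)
Your proposal is correct and follows exactly the approach the paper intends: it simply composes Theorem~\ref{thm:point+halfspace} with Theorem~\ref{thm:maximum_matching}, just as in the two preceding corollaries. The paper itself offers no further detail beyond the one-line remark that these two theorems ``lead to the following,'' so your exponent arithmetic and $\eps$-absorption argument spell out precisely what is implicit there.
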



\subsection{Bottleneck distances}
\label{sec:bottleneck}

Let $P$ and $Q$ be two sets of $n$ points in the plane and consider
the bipartite graph obtained by connecting points whose
$L_\infty$-distance is at most $\lambda$:
\begin{align*}
  G_\infty(P,Q,\lambda) ~&=~ \big(P\sqcup Q, \{pq \mid \max\{|x(p)-x(q)|,|y(p)-y(q)|\} \le \lambda \}\big).
\end{align*}
We are interested in computing the \emph{$L_\infty$-bottleneck
distance} between $P$ and $Q$, defined as
\begin{align*}
  \lambda^*_\infty(P,Q) ~&=~ \min \{ \lambda \mid
  G_\infty(P,Q,\lambda) \text{ has a perfect matching} \}.
\end{align*}
The same concepts can be defined for the $L_1$- and $L_2$-distances
\begin{align*}
  G_1(P,Q,\lambda) ~&=~ \big(P\sqcup Q, \{pq \mid |x(p)-x(q)|+ |y(p)-y(q)| \le \lambda \}\big),\\
  \lambda^*_1(P,Q) ~&=~ \min \{ \lambda \mid 
  G_1(P,Q,\lambda) \text{ has a perfect matching}\},\\
  G_2(P,Q,\lambda) ~&=~ \big(P\sqcup Q, \{pq \mid (x(p)-x(q))^2+(y(p)-y(q))^2 \le \lambda^2 \}\big),\\
  \lambda^*_2(P,Q) ~&=~ \min \{ \lambda \mid 
  G_2(P,Q,\lambda) \text{ has a perfect matching}\}.
\end{align*}
The purpose of this section is to compute $\lambda^*_\infty(P,Q)$, $\lambda^*_1(P,Q)$
and $\lambda^*_2(P,Q)$ effectively.

Since disks in the $L_\infty$- and $L_1$-metrics are squares, either
axis-parallel or rotated, we can use
Corollary~\ref{coro:maximum+points+rectangles} to obtain the
following.

\begin{corollary}
  \label{coro:L_1+decision}
  Let $P$ and $Q$ be two sets of $n$ points in the plane and let
  $\lambda>0$ be a given value. We can decide whether the graphs
  $G_\infty(P,Q,\lambda)$ and $G_1(P,Q,\lambda)$ have a perfect
  matching in $O(n^{1+\eps})$ time with high probability, for
  any~$\eps>0$.
\end{corollary}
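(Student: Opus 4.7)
The plan is to reduce both decision problems to a maximum one-to-one matching computation in a point--rectangle incidence graph, and then invoke Corollary~\ref{coro:maximum+points+rectangles}.

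First I handle the $L_\infty$ case. Observe that $\max\{|x(p)-x(q)|,|y(p)-y(q)|\} \le \lambda$ is equivalent to $q$ lying in the axis-parallel square $B_\infty(p,\lambda)$ of side $2\lambda$ centered at $p$. Let $R_\infty = \{B_\infty(p,\lambda) \mid p \in P\}$. Then $G_\infty(P,Q,\lambda)$ is precisely the incidence graph $I(Q,R_\infty)$ on $2n$ objects. By Corollary~\ref{coro:maximum+points+rectangles} with $d=2$, we can compute a maximum one-to-one matching in $I(Q,R_\infty)$ in $O(n^{1+\eps})$ time with high probability. The graph $G_\infty(P,Q,\lambda)$ has a perfect matching if and only if this maximum one-to-one matching has exactly $n$ edges, so a single comparison finishes the $L_\infty$ case.

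For $L_1$, I apply the standard coordinate transformation $\varphi\colon (x,y) \mapsto (x+y,x-y)$, which maps the $L_1$ metric to (twice) the $L_\infty$ metric. Letting $P' = \varphi(P)$, $Q' = \varphi(Q)$, we have $|x(p)-x(q)| + |y(p)-y(q)| \le \lambda$ if and only if $\max\{|x(\varphi(p))-x(\varphi(q))|,|y(\varphi(p))-y(\varphi(q))|\} \le \lambda$. Hence $G_1(P,Q,\lambda)$ and $G_\infty(P',Q',\lambda)$ are isomorphic as bipartite graphs, and the $L_\infty$ algorithm above applied to $(P',Q')$ decides in $O(n^{1+\eps})$ time whether $G_1(P,Q,\lambda)$ has a perfect matching.

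There is no real obstacle here: the reduction is essentially a one-line geometric observation together with the coordinate rotation for $L_1$, and all the heavy lifting (building the compact representation of point--rectangle incidences and converting a fractional satisfying matching into a perfect one) is already packaged inside Corollary~\ref{coro:maximum+points+rectangles}.
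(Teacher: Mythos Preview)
Your proof is correct and follows essentially the same route as the paper: reduce to point--square incidences via the $L_\infty$ ball and handle $L_1$ by the $\pi/4$ rotation (your map $\varphi$), then invoke Corollary~\ref{coro:maximum+points+rectangles}. The parenthetical ``(twice)'' is a slip---with your $\varphi$ the $L_1$ and $L_\infty$ distances coincide exactly, as your displayed equivalence correctly states---but this does not affect the argument.
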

\begin{proof}
  Let $D_\infty(p,\lambda)$ be the disk of radius $\lambda$ centered
  at $p$ in the $L_\infty$-metric.  Note that $D_\infty(p,\lambda)$ is
  an axis-aligned square.  Then, the graph $G_\infty(P,Q,\lambda)$ has
  a perfect matching if and only if the incidence graph between the
  points $P$ and the axis-parallel square ranges $\{
  D_\infty(q,\lambda)\mid q\in Q \}$ has a perfect matching.
	
  For $G_1(P,Q,\lambda)$ and the $L_1$-metric, the same argument
  applies because the disks in the $L_1$-metric are squares that
  become axis-parallel after a rotation by $\pi/4$.
\end{proof}

As noted by Efrat, Itai and Katz~\cite[Section 6.2.2]{efrat} we can
now perform a binary search to find $\lambda^*_\infty(P,Q)$.  We
provide an overview of the idea.  Let $x_1,\dots,x_n$ be the sorted
$x$-coordinates of $P$ and let $x'_1,\dots,x'_n$ be the sorted
$x$-coordinates of $Q$.  Similarly, let $y_1,\dots,y_n$ be the sorted
$y$-coordinates of $P$ and let $y'_1,\dots,y'_n$ be the sorted
$y$-coordinates of $Q$.  We then construct the four $n\times n$
matrices $D_x,\bar D_x, D_y,\bar D_y$ with entries ($1\le i,j \le n$)
\begin{align*}
  D_x(i,j) = x_i-x'_j, ~~~~~& \bar D_x(i,j) = x'_i-x_j,\\
  D_y(i,j) = y_i-y'_j, ~~~~~& \bar D_y(i,j) = y'_i-y_j.
\end{align*}
These matrices are sorted, meaning that each row and each column is
monotone increasing or decreasing.  The optimal value
$\lambda^*_\infty(P,Q)$ is an element of the set
\[
\big\{\max\{|x(p)-x(q)|,|y(p)-y(q)|\} \mid p\in P, q\in Q\big\},
\] 
which is a
subset of the entries in the matrices ${\cal D}=\{D_x,\bar D_x,
D_y,\bar D_y\}$.  Thus, it suffices to find the smallest
entry~$\lambda$ in~${\cal D}$ such that $G_\infty(P,Q,\lambda)$ has a
perfect matching.  Selecting the $k$th element among the entries of
${\cal D}$ can be done using the algorithm of Frederickson and
Johnson~\cite{FredericksonJ84} in~$O(n)$ time.  It follows that the
binary search for the target value $\lambda^*_\infty(P,Q)$ requires
$O(\log (4n^2))=O(\log n)$ iterations, where in each iteration we
spend $O(n)$ time to select some $k$th value of ${\cal D}$ and a use
of Corollary~\ref{coro:L_1+decision}.  The logarithmic term is
absorbed by the $O(n^{\eps})$-term.

A similar argument applies to the $L_1$-metric.  We summarize.
\begin{corollary}
  \label{coro:L_1+bottleneck}
  Let $P$ and $Q$ be two sets of $n$ points in the plane.  We can compute
  the bottleneck distances $\lambda^*_\infty(P,Q)$ and
  $\lambda^*_1(P,Q)$ in $O(n^{1+\eps})$ time with high probability,
  for any~$\eps>0$.
\end{corollary}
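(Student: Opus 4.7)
The plan is to reduce the optimization problem to $O(\log n)$ invocations of the decision procedure from Corollary~\ref{coro:L_1+decision}, driven by a binary search on ranks in a family of sorted matrices. First, I would observe that $\lambda^*_\infty(P,Q)$ is necessarily realized as the $L_\infty$-distance between some pair $(p,q)\in P\times Q$: at the optimum, the bottleneck edge of a minimum perfect matching achieves this distance exactly. Hence $\lambda^*_\infty(P,Q)$ belongs to the entries of the four sorted $n\times n$ matrices $\mathcal{D}=\{D_x,\bar D_x,D_y,\bar D_y\}$ introduced in the discussion preceding the corollary, a candidate set of size at most $4n^2$.

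Next, I would binary-search on the rank $k\in\{1,\dots,4n^2\}$. At each step, use the Frederickson--Johnson selection algorithm~\cite{FredericksonJ84} to extract the $k$-th smallest entry $\lambda$ of $\mathcal{D}$ in $O(n)$ time (the algorithm for a single sorted matrix extends to any constant number of them by standard means), and then invoke Corollary~\ref{coro:L_1+decision} to decide whether $G_\infty(P,Q,\lambda)$ has a perfect matching. Monotonicity of the answer in $\lambda$ lets the search converge to the smallest $\lambda\in\mathcal{D}$ for which the decision returns yes, which is exactly $\lambda^*_\infty(P,Q)$. Each of the $O(\log(4n^2))=O(\log n)$ iterations costs $O(n)+O(n^{1+\eps/2})=O(n^{1+\eps/2})$, so the total is $O(n^{1+\eps/2}\log n)=O(n^{1+\eps})$ after absorbing the logarithmic factor.

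For $\lambda^*_1(P,Q)$ the argument is identical: Corollary~\ref{coro:L_1+decision} already supplies the decision oracle for $G_1(P,Q,\lambda)$ in the same running time, and the $L_1$-distance in the plane is transformed into the $L_\infty$-distance by the $\pi/4$ rotation used inside that corollary, so the rotated coordinates yield four analogous sorted matrices over which the same selection and binary search apply. I do not foresee a real obstacle; both the selection routine and the decision oracle are already in place, and the only thing that must be checked is the straightforward claim that every possible bottleneck value appears among the entries of $\mathcal{D}$.
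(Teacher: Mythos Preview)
Your proposal is correct and follows essentially the same approach as the paper: binary search on ranks in the four sorted matrices~$\mathcal{D}$, using Frederickson--Johnson selection in~$O(n)$ time per step and the decision oracle of Corollary~\ref{coro:L_1+decision}, with the~$O(\log n)$ factor absorbed into the~$n^{\eps}$ term. The treatment of the~$L_1$ case via the~$\pi/4$ rotation is also the same.
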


The time bounds of Corollary~\ref{coro:L_1+bottleneck} also apply 
to the $L_\infty$~metric in any fixed dimension~$d$. 
The decision problem can be solved using
the incidence graph of points and axis-parallel cubes
because of Corollary~\ref{coro:maximum+points+rectangles}.
The binary search to find the optimal value is an extension 
of the approach we have described for~$\Reals^2$, 
but using $O(d)$ sorted matrices. See
Efrat, Itai and Katz~\cite[Section 6.2.2]{efrat} for an explicit description.

A similar approach can be used for the $L_2$-distance.  Deciding
whether $G_2(P,Q,\lambda)$ has a perfect matching reduces to deciding
whether the incidence graph of a set of points and a set of disks of
radius $\lambda$ has a perfect
matching. Corollary~\ref{coro:maximum+points+disks} readily implies
the following.
\begin{corollary}
  \label{coro:L_2+decision}
  Let $P$ and $Q$ be two sets of $n$ points in the plane and let
  $\lambda>0$ be a given value. We can decide whether the graph
  $G_2(P,Q,\lambda)$ has a perfect matching in $O(n^{4/3+\eps})$ time
  with high probability, for any~$\eps>0$.
\end{corollary}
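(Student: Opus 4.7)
The plan is to reduce the decision problem for $G_2(P,Q,\lambda)$ to computing a maximum one-to-one matching in the incidence graph of a set of points and a set of congruent disks, so that Corollary~\ref{coro:maximum+points+disks} applies directly.

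First I would let $D_2(q,\lambda)$ denote the closed disk of radius $\lambda$ centered at $q$ in the $L_2$-metric, and set $R = \{ D_2(q,\lambda) \mid q \in Q\}$. Because $(x(p)-x(q))^2 + (y(p)-y(q))^2 \le \lambda^2$ is equivalent to $p \in D_2(q,\lambda)$, the bipartite graph $G_2(P,Q,\lambda)$ is isomorphic to the incidence graph~$I(P,R)$ via the identification $q \leftrightarrow D_2(q,\lambda)$. In particular, $G_2(P,Q,\lambda)$ has a perfect matching if and only if $I(P,R)$ admits a one-to-one matching of size~$n$.

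Next I would observe that $R$ is a set of at most $n$ congruent disks, so Corollary~\ref{coro:maximum+points+disks} yields a maximum one-to-one matching in $I(P,R)$ in time $O(n^{4/3+\eps})$ with high probability. The decision is then made by comparing its size to~$n$. The construction of $R$ from $Q$ and the final size check take linear time, so they are absorbed into the stated bound.

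There is no real obstacle here: the statement is essentially a direct specialization of Corollary~\ref{coro:maximum+points+disks}, and the only content is verifying that the reduction to a points-versus-congruent-disks incidence instance is faithful, which is immediate from the definition of~$G_2(P,Q,\lambda)$.
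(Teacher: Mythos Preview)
Your proposal is correct and matches the paper's approach: the paper states that Corollary~\ref{coro:maximum+points+disks} ``readily implies'' the result, via exactly the reduction you describe (replacing each $q\in Q$ by the radius-$\lambda$ disk centered at $q$). There is nothing to add.
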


We can now perform a binary search in the set~$D$ of pairwise
distances between the points~$P$ and the points~$Q$, which
contains~$\lambda^*_2(P,Q)$.  Selecting the $k$th distance in $D$
takes $O(n^{4/3}\log^2 n)$ time using the algorithm of Katz and
Sharir~\cite{katz-1997}, or $O(n^{4/3}\log n)$ time using 
the recent improvement of Wang and Zhao~\cite{WangZ23}.  
Thus, each of the $O(\log (n^2))=O(\log n)$
iterations of the binary search takes $O(n^{4/3}\log n)$ time to
select some $k$th distance and a use of
Corollary~\ref{coro:L_2+decision}.  Again, the logarithmic term is
absorbed by the $O(n^{\eps})$-term.  We summarize.
\begin{corollary}
  \label{coro:L_2+bottleneck}
  Let $P$ and $Q$ be sets of $n$ points in the plane.  We can compute
  the bottleneck distance $\lambda^*_2(P,Q)$ in $O(n^{4/3+\eps})$ time
  with high probability, for any~$\eps>0$.
\end{corollary}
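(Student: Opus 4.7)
The plan is to reduce the optimization to a logarithmic number of calls to the decision procedure of Corollary~\ref{coro:L_2+decision}, using parametric search by explicit binary search in a structured set of candidate values.

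First I would observe that the optimum $\lambda^*_2(P,Q)$ must equal the Euclidean distance between some pair $(p,q) \in P \times Q$: at the optimum $\lambda$, the graph $G_2(P,Q,\lambda)$ has a perfect matching but $G_2(P,Q,\lambda')$ does not for any $\lambda' < \lambda$; decreasing $\lambda$ can only remove edges precisely at the $n^2$ critical values $\|p-q\|_2$, so the optimum lies in the set $D=\{\|p-q\|_2 \mid p\in P, q\in Q\}$. Thus it suffices to find the smallest element of $D$ at which the decision problem answers \emph{yes}.

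Next I would run a standard binary search over $D$. The set $D$ has size at most $n^2$, so only $O(\log n)$ iterations are needed. In each iteration I need two operations: (i) given a rank $k$, return the $k$th smallest element of $D$; (ii) test whether $G_2(P,Q,\lambda)$ has a perfect matching for that value $\lambda$. For (i) I would invoke the distance-selection algorithm of Katz and Sharir~\cite{katz-1997}, or the recent improvement of Wang and Zhao~\cite{WangZ23}, which selects the $k$th distance in the bichromatic set in $O(n^{4/3}\log n)$ time. For (ii) I would apply Corollary~\ref{coro:L_2+decision}, which runs in $O(n^{4/3+\eps})$ time with high probability.

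Putting these together, each of the $O(\log n)$ iterations costs $O(n^{4/3+\eps}) + O(n^{4/3}\log n) = O(n^{4/3+\eps})$ time, and the overall running time is $O(n^{4/3+\eps}\log n)$, which is absorbed into $O(n^{4/3+\eps})$ by slightly shrinking $\eps$. The high-probability bound carries over by a union bound over the $O(\log n)$ decision calls. There is no real obstacle here beyond citing the right selection subroutine; the only mild care needed is to ensure the final high-probability guarantee holds across all iterations, which is immediate since the number of calls is polylogarithmic.
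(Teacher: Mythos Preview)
Your proposal is correct and matches the paper's approach essentially line for line: binary search over the $O(n^2)$ bichromatic $L_2$-distances, using Katz--Sharir (or Wang--Zhao) distance selection for the candidate values and Corollary~\ref{coro:L_2+decision} for the decision step, with the extra $\log n$ factor absorbed into the~$\eps$. The only addition you make is the explicit union bound for the high-probability guarantee, which the paper leaves implicit.
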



\subsection{Bottleneck distance of persistence diagrams}
\label{sec:persistence}

We next describe an application to compute the bottleneck distance of
\emph{persistence diagrams}, one of the most important concepts in
computational topology; see for example the book by Edelsbrunner and
Harer~\cite[Part C]{EH} or the book by Dey and Wang~\cite[Chapter
  3]{DW}.

Let $\mathbb{X}$ be a topological space and consider a function
$f\colon\mathbb{X}\rightarrow \Reals$.  For any dimension $k$, the
persistence diagram $\Dgm_k(f)$ at dimension $k$ represents the
evolution of the $k$-th dimensional homology groups through the
filtrations $f^{-1}((-\infty,t])$ as $t\in \Reals$ increases.  It is a
  multiset of points in the plane~$\Reals^{2}$ above the diagonal
  $\Delta=\{(x,x)\mid x\in \Reals\}$ together with all the points
  on the diagonal~$\Delta$.  The persistence diagram is uniquely
  defined by the points above the diagonal.

The bottleneck distance between two persistence diagrams~$X$ and~$Y$
is defined as
\[
	W_\infty(X,Y) = \adjustlimits \inf_{\varphi:X\rightarrow Y} \sup_{p\in X}\| p-\varphi(p) \|_\infty,
\]
where $\varphi$ iterates over all the bijections. Here it is important
that~$X$ and~$Y$ have an infinite number of points because of the
added diagonal~$\Delta$, and thus there are several bijections.  The
celebrated stability of persistence diagram 
tells us that, for a triangulable topological space~$\mathbb{X}$, and tame functions
$f,g\colon\mathbb{X}\rightarrow \Reals$, the bottleneck distance of
the persistence diagrams of $f$ and $g$ is upper bounded by the
$L_\infty$-norm of $f-g$, that is,
\[
\forall k: ~~~ W_\infty(\Dgm_k(f),\Dgm_k(g)) ~\le~ \|f - g \|_\infty ~=~ \max_{x\in \mathbb{X}} |f(x)-g(x)|.
\]
See the original work of Cohen-Steiner, Edelsbrunner, and
Harer~\cite{Cohen-SteinerEH07} or the textbook~\cite[Part C]{EH}.  It
is hard to overstate the importance of stability in modern topological
data analysis, and the computation of the bottleneck distance between
persistence diagrams is one of the key tools. See for example Kerber,
Morozov and Nigmetov~\cite{KerberMN17} for a library offering an
efficient computation.

We now turn to our result. First we discuss a characterization of the
bottleneck distance using discrete point sets. The main idea is that
for a point outside the diagonal~$\Delta$ we can decide to match it to
a point outside the diagonal or to its orthogonal projection
on~$\Delta$.

Let~$X$ and~$Y$ be two persistence diagrams.  Let $X_0$ be the points
of $X$ outside the diagonal $\Delta$ and let $X'_0$ be the point set
obtained by projecting $X_0$ onto $\Delta$.  Thus, $X'_0=\{
(\tfrac{x(p)+y(p)}{2},\tfrac{x(p)+y(p)}{2})\mid p\in X_0\}$.  We
define $Y_0$ and $Y'_0$ analogously.  Set $P=X_0\sqcup Y'_0$ and
$Q=Y_0\sqcup X'_0$. Note that $|P|=|Q|= |X|+|Y|$.  For
each~$\lambda\ge0$, we define the bipartite graph $G_\lambda(X,Y)$
with vertex set $P\sqcup Q$ and edge set
\[
\{ pq \mid p\in Y'_0 \text{ and } q\in X'_0 \} ~\cup~
\{ pq \mid (p\in X_0 \text{ or } q\in Y_0) \text{ and } \| p-q \|_\infty \le \lambda\}. 
\]
Thus, we add the edges connecting points at $L_\infty$-distance at
most $\lambda$ and also add all the edges connecting points on the
diagonal $\Delta$.  The following property is a standard observation;
see for example~\cite[Section VIII.4]{EH} or~\cite[Lemma
  2.2]{KerberMN17}.

\begin{lemma}
  The bottleneck distance of persistence diagrams~$X$ and~$Y$ is the
  minimum~$\lambda\ge 0$ such that~$G_\lambda(X,Y)$ has a perfect
  matching.
\end{lemma}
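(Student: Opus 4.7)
The plan is to show the two inequalities separately by translating between bijections $\varphi\colon X\to Y$ and perfect matchings of $G_\lambda(X,Y)$. Throughout, for a point~$p=(a,b)$ with~$a<b$, let $\pi(p)=(\tfrac{a+b}{2},\tfrac{a+b}{2})$ denote its orthogonal projection onto~$\Delta$. A direct calculation shows that $\pi(p)$ minimizes the $L_\infty$-distance from~$p$ to~$\Delta$, giving $\|p-\pi(p)\|_\infty=\tfrac{b-a}{2}\le \|p-z\|_\infty$ for any $z\in\Delta$. This is the only geometric fact needed.

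First I would show that if~$G_\lambda(X,Y)$ has a perfect matching~$M$, then $W_\infty(X,Y)\le \lambda$. Given~$M$, I would build a bijection~$\varphi\colon X\to Y$ as follows. For each~$p\in X_0$ matched in~$M$ to some~$q\in Y_0$, put $\varphi(p)=q$; the edge condition gives $\|p-\varphi(p)\|_\infty\le \lambda$. For each~$p\in X_0$ matched to its projection in~$X'_0$, set $\varphi(p)=\pi(p)\in\Delta\subseteq Y$, with $\|p-\pi(p)\|_\infty\le\lambda$ again by the edge condition. Symmetrically, for each~$q\in Y_0$ matched to its projection in~$Y'_0$, define $\varphi^{-1}(q)=\pi(q)\in\Delta\subseteq X$. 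After these finitely many assignments, the remaining unused points of~$X$ and~$Y$ both lie on~$\Delta$ and form two sets of the same cardinality (in fact, both uncountable), so any bijection between them completes~$\varphi$ at zero cost. Thus $\sup_{p\in X}\|p-\varphi(p)\|_\infty\le\lambda$, so $W_\infty(X,Y)\le\lambda$.

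For the converse I would take any bijection~$\varphi\colon X\to Y$ with $\sup_{p\in X}\|p-\varphi(p)\|_\infty\le\lambda$ (existence for the relevant~$\lambda$ follows from the finiteness of off-diagonal points, which makes the infimum attained) and build a perfect matching~$M$ in~$G_\lambda(X,Y)$. Let $A=\{p\in X_0:\varphi(p)\in Y_0\}$. For $p\in A$, include the edge $p\varphi(p)$ (valid since $\|p-\varphi(p)\|_\infty\le\lambda$). For $p\in X_0\setminus A$, $\varphi(p)\in\Delta$, so by the key geometric fact $\|p-\pi(p)\|_\infty\le\|p-\varphi(p)\|_\infty\le\lambda$, and I include the edge $p\,\pi(p)$ with $\pi(p)\in X'_0$. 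For $q\in Y_0\setminus\varphi(A)$, $\varphi^{-1}(q)\in\Delta$, so similarly $\|q-\pi(q)\|_\infty\le\lambda$ and I include the edge $\pi(q)\,q$ with $\pi(q)\in Y'_0$. A short counting argument shows that exactly $|A|$ vertices remain unmatched in each of~$Y'_0$ and~$X'_0$; since all edges between $Y'_0$ and $X'_0$ are present in $G_\lambda(X,Y)$, any bijection among them completes~$M$.

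The two directions together yield $W_\infty(X,Y)=\min\{\lambda\ge 0 : G_\lambda(X,Y)\text{ has a perfect matching}\}$, the minimum being attained because the finitely many off-diagonal points give only finitely many threshold values of~$\lambda$. I do not expect any real obstacle; the only subtlety is the bookkeeping in the last paragraph to verify that the leftover diagonal points on the two sides have matching cardinalities, which follows from $|X'_0|=|X_0|$ and $|Y'_0|=|Y_0|$.
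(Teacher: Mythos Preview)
Your argument is essentially correct and complete; the paper itself does not prove this lemma but simply cites it as a standard observation (referring to Edelsbrunner--Harer and Kerber--Morozov--Nigmetov), so there is nothing to compare against.

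One small imprecision worth fixing in your first direction: a point $p\in X_0$ need not be matched by~$M$ to \emph{its own} projection~$\pi(p)$; it could be matched to any $q\in X'_0$ with $\|p-q\|_\infty\le\lambda$. Your geometric fact then still gives $\|p-\pi(p)\|_\infty\le\|p-q\|_\infty\le\lambda$, so defining $\varphi(p)=\pi(p)$ is justified, but the sentence ``with $\|p-\pi(p)\|_\infty\le\lambda$ again by the edge condition'' is not literally right---it is the edge condition \emph{together with} the projection inequality. The same remark applies symmetrically to $q\in Y_0$ matched into~$Y'_0$. With that clarification the proof goes through.
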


The decision version is then easy.
\begin{lemma}
  Let $X$ and $Y$ be persistence diagrams, each with at most $n$
  points outside the diagonal and let~$\lambda$ be a given value.  We
  can decide whether~$G_\lambda(X,Y)$ has a perfect matching
  in~$O(n^{1+\eps})$ time with high probability, for any~$\eps>0$
\end{lemma}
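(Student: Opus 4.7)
The plan is to reduce to Lemma~\ref{lem:perfect_matching} by exhibiting a compact representation of $G_\lambda(X,Y)$ of size $O(n\log^2 n)$, after which the lemma yields the desired $O(n^{1+\eps})$ running time (the logarithmic factors being absorbed into $n^\eps$).

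First, I would split the edge set of $G_\lambda(X,Y)$ into three families according to where the endpoints lie. Setting $P=X_0\sqcup Y'_0$ and $Q=Y_0\sqcup X'_0$, the only pairs $(p,q)\in P\times Q$ with $p\in Y'_0$ and $q\in X'_0$ that are not obliged to satisfy the distance condition lie in the complete bipartite graph $K(Y'_0,X'_0)$, which is itself a single piece of a compact representation of size $|Y'_0|+|X'_0|=O(n)$. All remaining edges satisfy $\|p-q\|_\infty\le\lambda$ and either have $p\in X_0$ (and $q$ arbitrary in $Q$) or have $q\in Y_0$ (and $p$ arbitrary in $P$); these two classes overlap only on pairs with $p\in X_0$ and $q\in Y_0$, but compact representations need not be edge-disjoint, so duplication is harmless.

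Next, I would encode each of the two distance-constrained classes as a point-versus-axis-parallel-square incidence problem. For a point $q\in Q$, the set $\{p\in\Reals^2:\|p-q\|_\infty\le\lambda\}$ is the axis-parallel square $S_q$ of side $2\lambda$ centered at $q$; the edges with $p\in X_0$ are then exactly the incidences between the $|X_0|\le n$ points of $X_0$ and the $|Q|\le 2n$ squares $\{S_q:q\in Q\}$. By Theorem~\ref{thm:point+rectangle} (in dimension $d=2$) this incidence graph admits a compact representation of size $O(n\log^2 n)$, computable in $O(n\log^2 n)$ time. The symmetric construction, using points $Y_0$ and squares $\{S_p:p\in P\}$, handles the edges with $q\in Y_0$ in the same bound.

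Taking the union of these three compact representations gives a compact representation of $G_\lambda(X,Y)$ on $|P\sqcup Q|=O(n)$ vertices with total size $\sigma=O(n\log^2 n)$, assembled in $O(n\log^2 n)$ time. Applying Lemma~\ref{lem:perfect_matching} then decides whether $G_\lambda(X,Y)$ has a perfect matching in $O(\sigma^{1+\eps/2})=O((n\log^2 n)^{1+\eps/2})=O(n^{1+\eps})$ time with high probability, for any constant $\eps>0$. The only conceptual subtlety — and the one I would verify most carefully — is the edge-by-edge case analysis in the first step, to be sure that the union of $K(Y'_0,X'_0)$ and the two point-square incidence graphs reproduces exactly the edge set specified in the definition of $G_\lambda(X,Y)$ and introduces no spurious edges between $Y'_0$ and $X'_0$ beyond those already permitted.
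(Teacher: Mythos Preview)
Your proof is correct and follows essentially the same approach as the paper: build a near-linear-size compact representation of $G_\lambda(X,Y)$ from the complete bipartite piece $K(Y'_0,X'_0)$ together with point--versus--axis-parallel-square incidences via Theorem~\ref{thm:point+rectangle}, then invoke the perfect-matching machinery. The only cosmetic difference is that the paper writes the distance-constrained edges as a single incidence graph $G_\infty(P,Q,\lambda)$ (one application of Theorem~\ref{thm:point+rectangle} on all of $P$ against squares around all of $Q$), whereas you split them into two incidence problems; both give an $O(n\log^2 n)$ representation and the same final bound.
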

\begin{proof}
  The graph $G_\lambda(X,Y)$ has a compact representation because it
  is $G_\infty(P,Q,\lambda)\cup K(Y'_0,X'_0)$.  The first part is
  equivalent to an incidence graph between points and squares of side
  length~$\lambda$, which has a compact representation of size
  $O(n\log^2 n)$ by Theorem~\ref{thm:point+rectangle}.  
  The second part is a complete graph with a compact
  representation of size $|X'_0|+|Y'_0|=|X|+|Y|\le 2n$. Altogether,
  the graph $G_\lambda(X,Y)$ has a compact representation of size
  $O(n\log^2 n)$ and we can decide whether it has a perfect matching
  using Theorem~\ref{thm:maximum_matching}.
\end{proof}

The bottleneck distance of two persistence diagrams~$X$ and~$Y$ is
defined by the $L_\infty$-distance between a point in $P=X_0\sqcup
Y'_0$ and a point in $Q=Y_0\sqcup X'_0$. This is the case because only
for those values of $\lambda$ the graph $G_\lambda(X,Y)$ undergoes a
change.  The same method that has been described for the binary search
of Corollary~\ref{coro:L_1+bottleneck}, namely using searching in
sorted matrices, applies here. We obtain the following:

\begin{corollary}
  For persistence diagrams~$X$ and~$Y$, each with at most $n$
  points outside the diagonal, we can compute their bottleneck
  distance $W_\infty(X,Y)$ in $O(n^{1+\eps})$ time with high
  probability, for any~$\eps>0$
\end{corollary}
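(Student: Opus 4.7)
The plan is to reduce the computation of $W_\infty(X,Y)$ to a binary search that invokes the decision procedure of the previous lemma, exactly following the template already used in Corollary~\ref{coro:L_1+bottleneck}. The previous lemma gives an $O(n^{1+\eps})$-time test, for any fixed $\lambda$, of whether $G_\lambda(X,Y)$ has a perfect matching, so it suffices to search efficiently through a discrete set of candidate values provably containing $W_\infty(X,Y)$.

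First I would characterize the set of critical values. The graph $G_\lambda(X,Y)$ changes only when $\lambda$ crosses an $L_\infty$-distance between a point of $P = X_0 \sqcup Y'_0$ and a point of $Q = Y_0 \sqcup X'_0$ that is not one of the always-present edges between $Y'_0$ and $X'_0$ (those edges contribute nothing because their presence is independent of $\lambda$). Hence $W_\infty(X,Y)$ lies in the set of at most $O(n^2)$ values $\max\{|x(p)-x(q)|, |y(p)-y(q)|\}$ for $(p,q)\in P\times Q$.

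Next I would organize these candidates into a constant number of sorted matrices, just as in Corollary~\ref{coro:L_1+bottleneck}: sort the $x$-coordinates of $P$ and the $x$-coordinates of $Q$, form the $O(n)\times O(n)$ matrix of signed $x$-differences (monotone along rows and columns), and do the same for $y$-coordinates; then the relevant maxima of absolute values yield a constant number of sorted matrices whose entries cover all critical values. The Frederickson--Johnson selection algorithm~\cite{FredericksonJ84} selects the $k$-th smallest entry in $O(n)$ time.

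Finally I would run the binary search. Each of the $O(\log n)$ iterations performs one $O(n)$-time selection and one invocation of the decision procedure at cost $O(n^{1+\eps})$; the total time is $O(\log n)\cdot O(n^{1+\eps}) = O(n^{1+\eps})$, absorbing the logarithmic factor into $n^{\eps}$, with high probability. I do not anticipate a serious obstacle: the monotone structure of $G_\lambda(X,Y)$ in $\lambda$, the fact that diagonal-to-diagonal edges never influence the critical value, and the well-established sorted-matrices framework together make this a direct transcription of the argument used for Corollary~\ref{coro:L_1+bottleneck}.
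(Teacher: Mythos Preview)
Your proposal is correct and follows essentially the same approach as the paper: identify the critical values of~$\lambda$ as the $L_\infty$-distances between points of $P=X_0\sqcup Y'_0$ and $Q=Y_0\sqcup X'_0$, organize them into a constant number of sorted matrices, and binary-search with Frederickson--Johnson selection together with the decision lemma. The paper's own argument is in fact a one-sentence pointer back to the method of Corollary~\ref{coro:L_1+bottleneck}, so your write-up is slightly more detailed but otherwise identical.
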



\section{Geometric matching problems with arbitrary supplies and demands}
\label{sec:arbitrary}

In the previous section we made use of a fast algorithm for max-flow
that works for integer capacities.  In this section we consider the
scenario where supplies and demands are arbitrary real numbers, and we
insist on an algorithm that solves the problem exactly on the real
RAM, as is customary in computational geometry.

Clearly, we can restate Theorem~\ref{thm:integral} by making use of a
combinatorial max-flow algorithm.  The fastest combinatorial algorithm
that works for arbitrary capacities runs in time~$O(n m)$, where~$n$
is the number of vertices, $m$~the number of edges of the
network. Applying this to the graph of Figure~\ref{fig:example2}
results in a running time of~$O(\sigma^{2})$, where~$\sigma$ is the
size of the compact representation of the incidence graph~$I(P, R)$.

We now show how this can be improved when~$\sigma > n$.  Instead of
explicitly computing the five-layer graph of Figure~\ref{fig:example2}, 
we run Dinitz' algorithm on the classic four-layer graph
based on~$I(P,R)$, but we represent that graph and all intermediate
graphs implicitly, making use of compact representations.

\subsection{Dinitz' algorithm with pruning}

Given a set of at most $n$~points~$P$ with supplies~$s_p$, a set of at
most $n$~ranges~$R$ with demands~$d_r$, and an incidence graph~$I(P,
R)$ given through a compact representation of size~$\sigma$, we
consider the flow network consisting of a source~$s$, the
vertices~$P$, the vertices~$R$, and a sink~$t$: For each~$p\in P$, we
have the edge~$(s, p)$ with capacity~$s_{p}$, for each~$r\in R$, we
have the edge~$(r, t)$ with capacity~$d_{r}$, and for each~$(p, r) \in
I(P, R)$ we have the edge~$(p, r)$ with infinite capacity. Let~$G$
denote this flow network.  Since~$G$ can have quadratic
complexity, we will not actually construct it explicitly.

Let~$f$ be a flow in~$G$. Naturally, $f$~induces a matching, so we
call~$f$ a flow or matching interchangeably.  Let $H(f)$ be the
\emph{undirected} graph obtained from~$G$ by removing~$s$ and~$t$, and
keeping the undirected edges $\{p,r\}$ if and only if~$f(p,r) > 0$.
\begin{lemma}
  \label{lem:forest}
  For every flow $f$ in $G$, there exists a flow $f_0$ in $G$ with the
  same value such that $H(f_0)$ is a forest and a subgraph of $H(f)$.
\end{lemma}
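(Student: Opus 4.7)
The plan is to start from $f$ and iteratively eliminate cycles in $H(f)$ by pushing flow around them, without altering the flow value. Set $f_0 := f$ and loop as long as $H(f_0)$ contains a cycle. Because $G$ is bipartite with all directed $P$-$R$ edges oriented from $P$ to $R$, any cycle in the undirected graph $H(f_0)$ is an even cycle of the form $C = p_1, r_1, p_2, r_2, \ldots, p_k, r_k, p_1$ alternating between $P$ and $R$. Orient $C$ in the direction just written; then the edges $(p_i, r_i)$ of $G$ are traversed \emph{forward} by $C$, while the edges $(p_{i+1}, r_i)$ of $G$ are traversed \emph{backward}.

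The key operation is to push $\varepsilon$ units of flow around $C$: add $\varepsilon$ to every forward edge $f_0(p_i, r_i)$ and subtract $\varepsilon$ from every backward edge $f_0(p_{i+1}, r_i)$. At each internal vertex $p_i$, one outgoing edge increases by $\varepsilon$ and another decreases by $\varepsilon$, so conservation at $p_i$ is preserved; symmetrically for each $r_i$. No edge incident to $s$ or $t$ is modified, so the flow value is unchanged and the capacities $s_p$ and $d_r$ remain respected. The $P$-$R$ edges have infinite capacity, so increasing them is always allowed. Choosing
\[
\varepsilon ~=~ \min_{i} f_0(p_{i+1}, r_i) ~>~ 0,
\]
keeps all flow values non-negative and forces the flow on at least one backward edge of $C$ to drop to zero.

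After updating $f_0$, every edge that was removed from $H(f_0)$ had its flow strictly decreased to zero, and every edge whose flow changed was already present in $H(f_0)$ (the forward edges had positive flow because they lay in the cycle, so they remain in $H$). Hence $|E(H(f_0))|$ strictly decreases while $H(f_0)$ only shrinks as a subgraph of the original $H(f)$. This guarantees termination after at most $|E(H(f))|$ iterations, at which point $H(f_0)$ is acyclic, i.e.\ a forest, and the flow value equals that of $f$.

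The only delicate point is book-keeping the orientation: since cycles live in the undirected graph $H(f_0)$ but capacities and conservation must be checked in the directed network $G$, one has to verify carefully that exactly the backward-traversed edges can be decreased and that the forward-traversed ones absorb the increase without violating capacity — which works precisely because the $P$-to-$R$ edges are uncapacitated. Everything else is routine.
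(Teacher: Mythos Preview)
Your proof is correct and follows essentially the same approach as the paper's: find a cycle in $H(f_0)$, which is necessarily even by bipartiteness, partition its edges into two alternating classes, and push flow around the cycle so as to zero out at least one edge while preserving conservation and the value of the flow. The paper phrases this as colouring the edges alternately red and blue and choosing $\Delta$ to be the minimum flow on the blue edges, whereas you phrase it via an orientation of the cycle with forward/backward edges and $\varepsilon$ the minimum on the backward ones; the two formulations are interchangeable, and your explicit check that no new edges enter $H(f_0)$ (so the edge count strictly drops) is exactly the termination argument the paper uses.
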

\begin{proof}	
  Suppose that $H(f)$ contains a cycle.  This cycle zig-zags back and
  forth between~$P$ and~$R$.  We color the edges of the cycle
  alternatingly red and blue.  Both red and blue edges carry positive flows 
  out of their endpoints in $P$.  We can add a value~$\Delta$ to the flow
  on all red edges and subtract~$\Delta$ from the flow on all blue
  edges without changing the outgoing and incoming flow at any node.
  Let~$e$ be the blue edge that has the minimum flow value among the
  blue edges.  By choosing~$\Delta$ to be the flow value on~$e$, we
  can reduce the flow value on~$e$ to zero.  This gives a new flow
  $f'$ in $G$ such that $H(f')$ is the subgraph of $H(f)$ with~$e$
  (and possibly more edges) deleted.  We repeat this procedure until
  there is no cycle left in~$H(f_{0})$.
\end{proof}

We will use Dinitz's algorithm to compute a maximum flow in~$G$.
Dinitz' algorithm starts with the zero flow, and augments the flow in
phases.  In each phase, it computes the \emph{level graph} $L(f)$ of
the residual graph~$G(f)$ of~$G$ with respect to the current flow~$f$.
The level graph~$L(f)$ is defined as follows: For each node~$v$
of~$G(f)$, we define the \emph{level}~$\ell(v)$ of~$v$ as the length
of the shortest path (in terms of the number of edges) from~$s$ to~$v$ in~$G(f)$.  
The level graph contains exactly those edges~$(u, v)$
of~$G(f)$ where~$\ell(v) = \ell(u) + 1$.  In other words, all edges of
the level graph are on a shortest path starting in~$s$.

Clearly, $s$ is the only node at level zero in~$L(f)$. In our
application, every odd level consists of elements of~$P$, every even
level other than zero consists of elements of~$R$.  The sink~$t$
resides in an odd level.  We ignore all other nodes on
level~$\ell(t)$ and above.

Next, Dinitz' algorithm computes a \emph{blocking flow}~$g$ in
$L(f)$, and augments~$f$ to~$f + g$ before proceeding to the next
phase. A flow $g$ is blocking iff for every
path~$\gamma$ from $s$ to~$t$ in $L(f)$, there is an edge $e$
of~$\gamma$ that is saturated by~$g$, that is, $g$ is equal to the
capacity on $e$. Dinitz' algorithm terminates when there is no longer
a path from~$s$ to~$t$ in the level graph.

Our algorithm differs from Dinitz' algorithm in only one detail: After
augmenting the current flow, we apply Lemma~\ref{lem:forest} to prune
the current matching.  It follows that at the end of each phase, the
graph~$H(f)$ has at most~$2n-1$~edges (since it is a forest).

The following lemma shows that at most~$n$ phases of this algorithm
suffice to obtain a maximum flow in~$G$.  This is essentially
Theorem~8.4 in~\cite{tarjan}, but some extra effort is needed due to
the pruning step (by Lemma~\ref{lem:forest}) at the end of each phase.

We first establish some notation.  In the residual graph~$G(f)$ and
the level graph~$L(f)$, we call the edges from~$s$ to~$P$ the
\emph{feeder edges}, the edges from elements of~$P$ to elements of~$R$
the \emph{forward edges}, the edges from~$R$ to $t$ the \emph{draining
edges}, and all other edges \emph{backward edges}.  Note that backward
edges do not exist in~$G$, rather they are the reverse of an edge with
positive flow in~$G$.  In the residual graph, backward edges can be
directed from~$R$ to~$P$, from~$P$ to~$s$, or from~$t$ to~$R$.  Only
backward edges from~$R$ to~$P$ will appear in the level graph, as
$\ell(s) = 0$ and since we ignore levels after~$\ell(t)$.

By the definition of the residual graph, the capacity of every feeder
edge~$(s,p)$ is the unused supply left at~$p$, that is $s_p - \sum_{r
  \in R} f(p,r)$.  The capacities of forward edges are $\infty$.  The
capacity of every drain edge~$(r,t)$ is equal to the demand at~$r$
that has not been satisfied yet, that is~$d_r - \sum_{p \in P}
f(p,r)$.  Finally, the capacity of a backward edge $(u, v)$ is equal
to~$f(v, u)$.

\begin{lemma}
  \label{lem:block}
  Starting with the zero flow, at most $n$~successive augmentations by
  blocking flows in the corresponding level graphs will yield a
  maximum flow in~$G$.
\end{lemma}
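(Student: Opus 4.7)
The plan is to adapt the classical analysis of Dinitz' algorithm, in which the $s$-$t$ distance in the residual graph strictly increases after every phase, and then verify that the extra pruning step does not disturb this monotonicity. The layered structure $s \to P \to R \to t$ of $G$ is used to strengthen ``strict increase'' into an increase of at least~$2$ per phase.

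First, I would spell out the bipartite/layered structure of the level graph. Because every edge of~$G$ and every reverse edge goes between $\{s\}$, $P$, $R$, and $\{t\}$ in the pattern indicated, a straightforward induction on the level shows that in $L(f)$ every vertex of~$P$ lives only in odd levels, every vertex of~$R$ only in even levels~$\ge 2$, and $t$ only in an odd level. In particular $\ell(t)$ is always odd, $\ell(t) \ge 3$, and $\ell(t) \le |P|+|R|+1 \le 2n+1$.

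Second, I would reproduce the standard argument that after augmenting by a blocking flow~$g$ in~$L(f)$, the shortest $s$-$t$ distance in the residual graph~$G(f+g)$ is \emph{strictly} greater than~$\ell(t)$. The point is that every edge $(u,v)$ present in $G(f+g)$ satisfies $\ell(v) \le \ell(u)+1$; edges with $\ell(v)=\ell(u)+1$ lie in~$L(f)$; and any $s$-$t$ path in $L(f)$ of length exactly $\ell(t)$ is blocked by~$g$. So any $s$-$t$ walk in $G(f+g)$ uses at least one edge with $\ell(v)\le \ell(u)$, forcing its length to exceed $\ell(t)$.

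Third, I would argue that pruning does not decrease $s$-$t$ distance. Lemma~\ref{lem:forest} modifies the flow only on the forward edges between $P$ and~$R$, reducing some of them to zero; the feeder and drain capacities are untouched, the forward edges still carry infinite capacity and hence remain in the residual graph regardless of their flow values, and the only effect is that some backward edges from~$R$ to~$P$ disappear. Consequently the post-pruning residual graph is a \emph{subgraph} of the pre-pruning residual graph, so shortest distances can only grow. Combining with the previous step, one phase of our pruned Dinitz raises the shortest $s$-$t$ distance strictly, and by the parity observation it raises it by at least~$2$.

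Finally I would count phases. If $\ell_i$ denotes the shortest $s$-$t$ distance in the residual graph at the start of phase~$i$, then $\ell_1 \ge 3$ and $\ell_{i+1} \ge \ell_i + 2$, so $\ell_i \ge 2i+1$; and phase~$i$ can occur only if $\ell_i \le 2n+1$, which forces $i \le n$. The main subtlety, and the only place where one must depart from the textbook analysis, is the pruning step: the key observation is that it only removes (rather than adds) residual edges, which in turn relies crucially on the forward edges in~$G$ having infinite capacity.
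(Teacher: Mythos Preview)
Your plan is correct, and it is in fact a cleaner decomposition than the paper's own argument. You factor the phase into two steps: first the standard Dinitz analysis shows that the $s$--$t$ distance in $G(f+g)$ strictly exceeds $\ell(t)$; then you observe that pruning can only \emph{delete} residual edges, so $G(f')\subseteq G(f+g)$ and the distance cannot drop. The paper instead compares $G(f')$ directly with $G(f)$: it proves two auxiliary claims (feeder/forward/draining edges of $G(f')$ already lie in $G(f)$, and every edge of $G(f')$ satisfies $\ell(w)\le\ell(v)+1$), deduces $\ell'(t)\ge\ell(t)$, and then argues strict inequality by contradiction. Both routes ultimately rest on the same fact---pruning touches only the $P$--$R$ edges, so feeder and drain residuals are unaffected while forward edges survive by virtue of their infinite capacity---but your packaging makes the role of infinite capacities and of the subgraph property of Lemma~\ref{lem:forest} more transparent, and it lets you quote the textbook Dinitz increment verbatim rather than redo it in the presence of pruning. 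One small point worth making explicit when you write it out: the statement of Lemma~\ref{lem:forest} only says ``same value'' and ``$H(f_0)\subseteq H(f)$,'' so the fact that $f'(s,p)=h(s,p)$ and $f'(r,t)=h(r,t)$ comes from the \emph{proof} of that lemma (cycles in $H(h)$ avoid $s$ and $t$); the paper's proof also invokes this, so you are on equal footing, but it is worth a sentence.
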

\begin{proof}
  We will show that the level of $t$ in the level graph increases
  after each augmentation.  Since the level of~$t$ is always odd,
  starts with~3, and can be at most~$2n+1$, there can be at most~$n$
  augmentations.

  Let $f$ be the current flow in $G$.  Let $Z = G(f)$ denote the
  residual graph of~$G$ with respect to~$f$.  Let~$g$ be a blocking
  flow in $L(f)$.  We use $\ell(v)$ to denote the level
  of the node $v$ in~$L(f)$, which is also the distance of~$v$ from~$s$ in~$Z$.
  Define~$h = f + g$.  Let $f'$ be obtained from $h$
  by Lemma~\ref{lem:forest}.  Let $Z' = G(f')$ be the residual graph
  of~$G$ with respect to~$f'$.  
  \begin{claim}
    \label{claim:rr}
    If $(v, w)$ is a feeder, forward, or draining edge of~$Z'$,
    then~$(v, w)$ is an edge of~$Z$.
  \end{claim}
  \begin{claimproof}[Proof of Claim~\ref{claim:rr}]
    If $(v,w)$ is a forward edge in~$Z'$, then $(v,w)$ is an edge
    of~$G$. Since the capacity of~$(v,w)$ is infinite, it is also an
    edge of~$Z$.
    
    The feeder and draining edges are never involved in a cycle in the
    proof of Lemma~\ref{lem:forest}.  So the flows on them are the
    same with respect to~$f$ and~$f'$, which means that their residual
    capacities in~$Z$ and~$Z'$ are the same.  Therefore, if $(v, w)$
    is in~$Z'$, it is also in~$Z$.
  \end{claimproof}
  \begin{claim}
    \label{claim:1}
    For every edge $(v, w)$ in $Z'$, we have $\ell(w) \leq \ell(v) + 1$.
  \end{claim}
  \begin{claimproof}[Proof of Claim~\ref{claim:1}]
    We first observe that if~$(v, w)$ is an edge of~$Z$, then $\ell(w)
    \leq \ell(v) + 1$ follows from the definition of~$\ell(w)$. For
    feeder, forward, and draining edges the claim therefore follows
    immediately from Claim~\ref{claim:rr}.

    If $(v, w)$ is a backward edge, then $f'(w, v) > 0$.  By
    Lemma~\ref{lem:forest}, $H(f')$ is a subgraph of $H(h)$
    and so $h(w, v) > 0$.  If $f(w, v) > 0$ as well, then the edge
    $(v, w)$ appeared in~$Z$, so $\ell(w) \leq \ell(v) + 1$.

    If $f(w, v) = 0$, then $g(w, v) = h(w, v) > 0$.  The positive
    value of the blocking flow~$g(w,v)$ means that $(w,v)$ was an edge
    in~$L(f)$, which by definition of~$L(f)$ means that $\ell(v) =
    \ell(w) + 1$ and therefore~$\ell(w) = \ell(v) - 1 < \ell(v) + 1$.
  \end{claimproof}
  
  Let now $\ell'(v)$ be the level of a node $v$ in $Z'$, that is, the
  level of~$v$ in~$L(f')$.  We claim that $\ell'(v) \geq
  \ell(v)$.  Indeed, consider any path from $s$ to $v$ in~$Z'$.  By
  Claim~\ref{claim:1}, $\ell(\cdot)$ increases by at most one along
  each edge of this path, so~$\ell(v)$ is at most the length of this
  path.

  In particular, we have $\ell'(t) \geq \ell(t)$.  We show that in
  fact $\ell'(t) > \ell(t)$.  For the sake of contradiction, assume
  that $\ell'(t) = \ell(t)$.  This means that for any shortest path
  from $s$ to $t$ in~$Z'$, $\ell(\cdot)$ increases by exactly one on
  each edge of this path.  This means that for each edge $(v, w)$ on
  this path, $v$ and~$w$ are on consecutive levels of~$L(f)$.  We claim
  that any shortest path of length $\ell(t)$ from $s$ to $t$ in $L(f')$ 
  is also a path in $L(f)$.  The reason is as follows.
  By Claim~\ref{claim:rr}, if $(v,w)$ is a feeder, forward, or draining
  edge, then $(v,w)$ is an edge of~$Z$ and therefore also an edge
  of~$L(f)$.  In the remaining case, $v \in R$, $w \in P$,
  and~$\ell(w) = \ell(v) + 1$.  If $(v,w)$ is an edge in $L(f)$, we 
  are done.   If $(v,w)$ is not an edge of~$L(f)$,
  then~$f(w,v) = 0$.  Since $(v,w)$ is an edge of~$Z'$, we have
  $f'(w,v) > 0$, and by Lemma~\ref{lem:forest} $h(w,v) = g(w,v) > 0$.
  This implies again that $(w,v)$ must have been a forward edge
  in~$L(f)$, and so $\ell(v) = \ell(w) + 1$, a contradiction.  This establishes
  our claim.

  It follows that the entire shortest path of length $\ell(t)$
  from~$s$ to~$t$ was already a path in~$L(f)$.  But by definition of
  a blocking flow, that means that~$g$ must have saturated some edge
  along this path.  If this was a feeder or draining edge, those edges
  are no longer in~$Z'$, a contradiction.  A forward edge cannot be
  saturated (it has infinite capacity), so it must be a backward edge,
  say $(r,p)$.  Saturating a backward edge~$(r,p)$ means that $g(r,p)
  = f(p,r)$.  It follows that $h(p,r) = 0$, which implies $f'(p,r) =
  0$ by Lemma~\ref{lem:forest}.  Therefore, $(r,p)$~is not an edge
  of~$Z'$, again a contradiction.
\end{proof}

We cannot afford to store the residual graph~$G(f)$ or the level
graph~$L(f)$ explicitly, as they may have a quadratic number of edges.
We will therefore store these graphs implicitly, using the given
compact representation for~$I(P, R)$.  Inbetween phases of the
algorithm, we only need to store the current flow~$f$.  Since it is
non-zero on~$O(n)$ edges only, we store this flow explicitly.  In the
following sections we describe how to implement a phase of Dinitz'
algorithm efficiently.

\subsection{Constructing the level graph}

Given the current flow~$f$, we will represent the level graph~$L(f)$
by explicitly constructing the elements of each level of~$L(f)$, all
feeder edges, all draining edges, and all backward edges.  There
are~$O(n)$ such edges, so we can indeed afford to store them
explicitly.  The forward edges are represented using compact
representations.  

Recall that we are given a compact representation~$\{ (U_i,V_i)\mid
i\in I\}$ for~$I(P, R)$.  We start by building an index for this
representation, which, for every~$p \in P$, gives us a list of all~$i
\in I$ where~$p \in U_i$.  We also set~$\tilde P := P$,~$\tilde R
:= R$, and~$\tilde I := I$.

There are at most $|P| \leq n$ feeder edges from $s$ to~$P$ in~$L(f)$.
We construct these edges and their capacities by brute-force in $O(n)$
time.  This also yields the set~$P_1 \subseteq P$ of elements of
level~1 of~$L(f)$.

We now query our index to retrieve the set of indices~$I_1 \subseteq
\tilde I$ such that~$U_i \cap P_1 \neq \emptyset$ for~$i \in I_1$.  We
set~$R_2 = \bigcup_{i \in I_1} V_i$ to be the set of vertices on
level~2, and build the following compact representation for the
forward edges from level~1 to level~2:
\[
\{ (U_i \cap P_1, V_i) \mid i\in I_1\},
\]
We update~$\tilde P := \tilde P \setminus P_1$,~$\tilde R := \tilde
R \setminus R_2$, and~$\tilde I := I \setminus I_1$.

If there is any~$r \in R_2$ that has an unmet demand, then we
place~$t$ in level~3 and produce the draining edges~$(r, t)$ with
capacity equal to the unmet demand of each~$r \in R_2$. The level
graph is then complete.

If there is no~$r \in R_2$ with unmet demand, we proceed to construct
level~3. For each~$r \in R_2$, we determine all~$p \in \tilde P$ with
flow into~$r$, that is, with~$f(p, r) > 0$.  The set~$P_3$ of all
these vertices~$p$ constitutes level~3. We record the edges~$(r,p)$ as
backward edges from level~2 to level~3.

We now again use our index to find the set of indices~$I_3 \subseteq
\tilde I$ such that~$U_i \cap P_3 \neq \emptyset$ for~$i \in I_3$.  We
set~$R_4 = \bigcup_{i \in I_3} V_i \cap \tilde R$ to be the set of
vertices on level~4, and build the following compact representation of
the forward edges from level~3 to level~4:
\[
\{ (U_i \cap P_3, V_i \cap \tilde R) \mid i\in I_3\},
\]
Again we update~$\tilde P := \tilde P \setminus P_3$,~$\tilde R :=
\tilde R \setminus R_4$, and~$\tilde I := \tilde I \setminus I_3$, and
repeat this process until either we place the sink~$t$ or an even
level becomes empty.  In the latter case there is no path from~$s$
to~$t$ in the level graph, and the current flow is a maximum flow.

\begin{lemma}
  \label{lem:level}
  The elements of each level of~$L(f)$, the feeder edges, the backward
  edges, the draining edges, and compact representations of all
  forward edges of $L(f)$ can be constructed in~$O(n + \sigma)$ time.
\end{lemma}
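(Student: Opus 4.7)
The plan is to carry out an amortized analysis of the construction described just before the lemma. All of the work will be charged either to a point in $P\cup R$, to an element of the compact representation (i.e., to some $p\in U_i$ or $r\in V_i$), or to an edge carrying positive flow in $f$; since the current flow is supported on $O(n)$ edges (by the pruning step of Lemma~\ref{lem:forest}) and $\sum_{i\in I}(|U_i|+|V_i|)=\sigma$, this will give the bound $O(n+\sigma)$.

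First I would describe the auxiliary data structures. The index that maps each $p\in P$ to the list of indices $i\in I$ with $p\in U_i$ can be built in $O(\sigma)$ time by a single scan of the compact representation. I also keep Boolean flags to represent the sets $\tilde P$, $\tilde R$, $\tilde I$ so that the updates $\tilde P:=\tilde P\setminus P_{2k+1}$ etc. take time proportional to the number of elements removed, and so that membership tests are $O(1)$. Finally, by a single $O(n+\sigma)$ scan I precompute for each $r\in R$ the value $\sum_{p}f(p,r)$ (so the unmet demand of $r$ can be queried in $O(1)$), and I store the nonzero entries of $f$ in buckets keyed by the $R$-endpoint so that the list $\{p:f(p,r)>0\}$ can be retrieved in time proportional to its size.

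Next I would account for the work of one phase. The feeder edges and level $P_1$ cost $O(n)$. For every odd level $2k+1$ already available, finding the candidate indices $I_{2k+1}$ is done by walking, for each $p\in P_{2k+1}$, through its index list and retaining only those $i$ that are still in $\tilde I$; since we then remove those $i$'s from $\tilde I$, every pair $(p,i)$ with $p\in U_i$ is inspected at most once across all phases, so the total work for index lookup is $O(\sigma)$. For each selected $i\in I_{2k+1}$, computing $U_i\cap P_{2k+1}$, $V_i\cap\tilde R$, and emitting the pair as part of the compact representation of the forward edges costs $O(|U_i|+|V_i|)$, and these costs sum to at most $\sigma$ since each $i$ is selected in at most one level. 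The union $R_{2k+2}=\bigcup_{i\in I_{2k+1}} V_i\cap\tilde R$ and the update of $\tilde R$ are paid for by the same amount of work. Backward edges from level $2k+2$ to level $2k+3$ are emitted by iterating, for each $r\in R_{2k+2}$, over the stored list of $p$ with $f(p,r)>0$; since each such edge is reported once and the support of $f$ has size $O(n)$, this totals $O(n)$ across the whole phase. Draining edges, whenever they are produced, cost $O(|R_{2k+2}|)\le O(n)$.

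The one subtlety I expect is bookkeeping for the running time of the index lookup: I need to ensure that an index $i$ which is examined from many $p\in P_{2k+1}$ is only added to $I_{2k+1}$ once and never re-examined in later levels. This is handled by combining the $\tilde I$ flag (to skip indices already placed on earlier levels) with a per-level marker set to $i$ the first time $i$ appears in the current level, cleared lazily as we iterate through $I_{2k+1}$ at the end of that level. With this, each pair $(p,i)$ is charged $O(1)$, each $i\in I$ is processed in at most one level with cost $O(|U_i|+|V_i|)$, each point is placed in at most one level, and each nonzero entry of $f$ contributes to at most one backward edge. Summing all contributions gives $O(n+\sigma)$, as claimed.
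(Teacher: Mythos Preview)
Your proposal is correct and follows essentially the same amortized argument as the paper's proof, only spelled out in more detail. One minor slip worth fixing: the reason each pair $(p,i)$ is scanned at most once during the index lookups is that each $p$ lies in at most one odd level $P_{2k+1}$ (so its index list is traversed once), not that $i$ has been removed from $\tilde I$; and ``phases'' there should read ``levels''.
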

\begin{proof}
  Constructing the feeder and draining edges explicitly takes $O(n)$
  time.  Finding the backward edges takes~$O(n)$ total time as~$H(f)$
  has~$O(n)$~edges.  Each pair~$(U_i, V_i)$ of the compact
  representation of~$I(P, R)$, is handled only once, and this takes
  time~$O(|U_i| + |V_i|)$ using our index and suitable datastructures
  for~$\tilde P$,~$\tilde R$, and~$\tilde I$.
\end{proof}

\subsection{Constructing a blocking flow}

We use Sleator and Tarjan's algorithm~\cite{sleator} for finding a
blocking flow in the level graph, see for instance the presentation in
Section~8.2 of Tarjan's book~\cite{tarjan}.  The algorithm computes a
blocking flow on a level graph with $n$~vertices and $m$~edges in
time~$O((n+m)\log (n+m))$.  To run the algorithm, we construct a
concrete level graph~$L'$ from our implicitly represented level
graph~$L(f)$. We start by setting~$L'$ to contain all the vertices,
feeder, backwards, and draining edges of~$L(f)$.  For each pair of
layers~$P_j$ and~$R_{j+1}$, for $j = 1, 3, 5, \dots$, we then proceed
as in Section~\ref{sec:integral}.  Namely, we insert an intermediate
layer~$C_j$ as follows: Recall that we have a compact representation
$\{(U_i, V_i) \mid i \in I_j\}$ of the forward edges from~$P_j$
to~$R_{j+1}$.  For each $i\in I_j$, we add a new vertex $v^j_i$,
directed edges $(p,v^j_i)$ for all $p\in U_i$, and directed edges
$(v^j_i,r)$ for all $r\in V_i$.  These edges have infinite capacity.

The graph~$L'$ has $O(\sigma)$ vertices and edges.  It is easy to see
that~$L'$ is the level graph of a flow network (namely the flow
network~$L'$ itself), so Sleator and Tarjan's algorithm computes a
blocking flow~$g'$ for~$L'$ in time~$O(\sigma \log \sigma)$.

We convert the flow~$g'$ on~$L'$ into a flow~$g$ on the graph~$L(f)$
as we have done in Lemma~\ref{le:reduction}.  It remains to argue
that~$g$ is indeed a blocking flow on~$L(f)$.  To this end, we observe
that the flows~$g$ and~$g'$ are identical on all feeder, backwards,
and draining edges.  The forward edges, on the other hand, have
infinite capacity, so they cannot be saturated by any flow.  For any
node~$v \in L(f)$ and any path from~$s$ to~$v$ in~$L(f)$, there is
also a path from~$s$ to~$v$ in~$L'$ that uses the same finite-capacity
edges.  Since~$g'$ saturates an edge on this path, $g$~also saturates
an edge on the path from~$s$ to~$v$ in~$L(f)$.

\subsection{Pruning the matching}

After computing a blocking flow, we augment the previous flow.  Since
the previous flow is non-zero on~$O(n)$ edges and the blocking flow is
non-zero on~$O(\sigma)$ edges, this can be done in time~$O(n + \sigma)
= O(\sigma)$. It remains to describe how to prune this new flow~$f$
efficiently, that is, how to compute the flow~$f_{0}$ guaranteed by
Lemma~\ref{lem:forest}. This is important to start each iteration with
a flow that uses $O(n)$ edges.

This turns out to be surprisingly hard.
We will use a red-blue link-cut tree, a variant of link-cut trees with
the following properties: The data structure stores a collection of
rooted trees on nodes that are either red or blue.  Every edge
connects a red node and a blue node, and carries a non-negative cost
(there are no costs on the nodes).  We will consider edges to have the
same color as their endpoint closer to the root of the tree (the
``parent'' of the edge).  Since we will support changing the tree
roots, this means that the color of an edge is a \emph{dynamic}
property.

The data structure supports the following operations in time~$O(\log
n)$, where~$n$ is the number of nodes in the forest.
\begin{di}
\item $\opmaketree(v)$: Create a new tree containing the single
  node~$v$.  The color of $v$ is given, as part of $v$.
\item $\opfindroot(v)$: Return the root of the tree containing~$v$.
\item $\oplink(v,w,x)$: Combine the trees containing $v$ and $w$
  into a single tree by adding the edge~$vw$ of value~$x$,
  making~$w$ the parent of~$v$.  This operation assumes that $v$ is
  a tree root and that $v$ and~$w$ have different colors.
\item $\opcut(v)$: Delete the edge between $v$ and its parent to split
  the tree containing~$v$ into two trees.  This operation assumes
  that~$v$ is not a tree root.
\item $\opevert(v)$: Make~$v$ the root of the tree containing~$v$.
\item $\opfindblue(v)$: Return the pair $(w, x)$ where $x$ is the
  minimum value of a \emph{blue edge} on the tree path from $v$ to
  $\opfindroot(v)$, and $w$ is the \emph{last} vertex on this path
  such that $(w, p(w))$ is a blue edge of cost~$x$.
\item $\opaddblue(v,x)$: Add the real number $x$ to the value of every
  \emph{blue edge} on the tree path from $v$ to
  $\mathsf{findroot}(v)$.
\item $\opaddred(v,x)$: Add the real number $x$ to the value of
  every \emph{red edge} on the tree path from $v$ to $\mathsf{findroot}(v)$.
\end{di}
We explain how to implement this data structure in
Section~\ref{sec:red-blue-link-cut}.

\begin{lemma}
  \label{lem:pruning} Given a matching~$f$ of~$G$ with $s$~edges, we
  can find in time~$O(s \log n)$ a matching~$f_{0}$ with the same
  value such that $H(f_0)$ is a forest and a subgraph of~$H(f)$.
\end{lemma}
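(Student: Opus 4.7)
The plan is to process the edges of $H(f)$ one at a time, maintaining a red-blue link-cut forest that stores a valid flow with the same value as $f$ and whose support is contained in $H(f)$. I will color every point of $P$ red and every range of $R$ blue, start with a singleton link-cut tree for each vertex of $H(f)$, and then insert the edges $(p, r)$ of $H(f)$ one by one, each carrying its flow value $c = f(p, r)$. For each such edge I first test with $\opfindroot$ whether $p$ and $r$ lie in different trees. If they do, I call $\opevert(p)$ and then $\oplink(p, r, c)$, which makes $r$ the parent of $p$; the new tree edge is blue (its parent is in $R$) and carries value $c$, maintaining the invariant.

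When $p$ and $r$ already lie in the same tree, I use the link-cut tree to carry out the cycle-cancellation step of Lemma~\ref{lem:forest}. After $\opevert(p)$, the tree path $\pi$ from $r$ up to $p$ together with the missing edge $(p, r)$ would form a bipartite cycle whose length is odd (because the forest is bipartite) and at least $3$ (because $(p, r)$ has not been added yet, so no tree edge connects $p$ directly to $r$). Under the convention ``edge color equals parent color'', the edges of $\pi$ alternate starting and ending with red: the edge incident to $p$ has $p \in P$ as parent, and symmetrically the edge incident to $r$ has a $P$-vertex as parent since $\pi$ has odd length. Hence the would-be edge $(p, r)$ plays the role of a blue edge, matching exactly the alternating two-coloring of the cycle used in the proof of Lemma~\ref{lem:forest}.

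I then call $\opfindblue(r)$ to obtain the minimum blue-edge value $\Delta$ on $\pi$ together with the corresponding vertex $w$, set $\delta = \min(c, \Delta)$, and invoke $\opaddblue(r, -\delta)$ followed by $\opaddred(r, \delta)$. By the alternation, every interior vertex of the cycle is incident to exactly one red and one blue cycle edge, so the net outflow at every vertex is preserved and the value of the flow does not change. If $\delta = c$, the new edge is the one whose value dropped to zero and I simply discard it; otherwise the tree edge at $w$ is the one that became zero, and I call $\opcut(w)$ followed by $\oplink(p, r, c - \Delta)$ (legal because $p$ is still the root of its component after the cut) to splice the new edge into the forest in its place.

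The main obstacle this design resolves is aligning the \emph{dynamic} color convention of the link-cut tree (edge color follows the current parent, which flips under $\opevert$) with the alternating coloring of the cycle to be cancelled; once $p$ is everted to the root, the alignment is automatic and $\opfindblue$, $\opaddblue$, and $\opaddred$ act on exactly the edges that the proof of Lemma~\ref{lem:forest} prescribes. Each of the $s$ edges of $H(f)$ triggers only $O(1)$ link-cut operations, each running in $O(\log n)$ time, so the total running time is $O(s \log n)$ and the final forest encodes the desired flow $f_0$.
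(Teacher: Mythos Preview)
Your approach matches the paper's: insert the edges of~$H(f)$ one by one into a red-blue link-cut forest, and when an edge~$(p,r)$ would close a cycle, push flow around the tree path using~$\opaddred$/$\opaddblue$ and cut a saturated edge.  Two minor points are worth flagging.  First, a slip of the pen: bipartite cycles have \emph{even} length; it is the tree path~$\pi$ that has odd length~$\geq 3$, which is exactly what you use in the next sentence.  Second, when~$\delta=\Delta$ there may be several blue edges on~$\pi$ that drop to value zero, but you only~$\opcut$ the single edge at~$w$; the paper instead repeatedly calls~$\opfindblue$ after the update and cuts \emph{all} zero-value blue edges (and amortizes these extra cuts against the links).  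Your variant is still correct---a zero-value tree edge carries no flow and is simply omitted when reading off~$f_0$, and if it later lies on some cycle path it is either revived by~$\opaddred$ or detected by~$\opfindblue$ with~$\Delta=0$ and cut at that time---but you should say so explicitly, since otherwise ``the final forest encodes~$f_0$'' glosses over why such edges are harmless.
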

\begin{proof}
  We start by creating a red-blue link-cut tree containing isolated
  nodes for the sets~$P$ and~$R$.  $P$~nodes are red, $R$~nodes are
  blue.  We then consider the edges of the matching~$f$ (that is,
  the edges of~$H(f)$) one by one.  Let $(p, r)$ be such an edge. We
  first perform~$\opfindroot(p)$ and~$\opfindroot(r)$ to determine
  if~$p$ and~$r$ are already connected in the current forest.  If not,
  then we merge the two trees using~$\opevert(r)$ and~$\oplink(r, p,
  f(p, r))$.

  If $p$ and~$r$ are already in the same tree, however, then we must
  not create a cycle in the matching graph.  We perform~$\opevert(p)$
  and $\opfindblue(r)$ to retrieve the smallest value~$\Delta$ on any
  blue edge on the path from~$p$ to~$r$ in the current matching.
  Since the edge~$(p, r)$ does not yet exist in the matching, there is
  at least one such blue edge.

  If $f(p, r) \leq \Delta$, then we do not add the
  edge~$(p,r)$. Rather, we perform~$\opaddred(r, f(p, r))$
  and~$\opaddblue(r, -f(p,r))$.  This pushes a flow of value~$f(p,r)$
  along the existing path from~$p$ to~$r$, while keeping the inflow
  and outflow of all intermediate nodes constant.  If $f(p,r) =
  \Delta$, then this could cause the flow on some blue edges to become
  zero.  We identify these edges using~$\opfindblue(r)$, and cut them
  from the tree using~$\opcut$.

  Otherwise, that is if $f(p,r)> \Delta$, we perform~$\opaddred(r,
  \Delta)$ and~$\opaddblue(r, -\Delta)$.  This pushes a flow of
  value~$\Delta$ along the existing path from~$p$ to~$r$, while
  keeping the inflow and outflow of all intermediate nodes
  constant. It also causes at least one blue edge to have value zero.
  We identify all such zero-value blue edges using~$\opfindblue(r)$
  and cut them from the tree using~$\opcut$.  As a result, $p$ and~$r$
  are no longer in the same tree, so we can add the missing flow $f(p,
  r) - \Delta$ by adding the edge~$(p, r)$, using the
  operations~$\opevert(r)$ and~$\oplink(r, p, f(p, r) - \Delta)$.

  There are $s$~edges in the matching~$f$, and handling each edge
  takes a constant number of link-cut tree operations that take
  time~$O(\log n)$, plus possibly the removal of all blue edges whose
  value became zero.  All such edges are cut from the forest, so the
  total number of these cuts is bounded by the number of edges ever
  added, and therefore at most~$s$.
\end{proof}

\subsection{Putting everything together}

\begin{theorem}
  \label{thm:combinatorial}
  Consider a geometric matching problem for a set $P$ of at most $n$
  points and a set $R$ of at most $n$ ranges. Assume that we have a
  compact representation of the incidence graph $I(P,R)$ of size
  $\sigma$.  Then we can compute a maximum matching in $O(n \sigma
  \log \sigma)$ time.
\end{theorem}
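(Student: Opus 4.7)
The plan is to assemble the bounds from the preceding lemmas: Lemma~\ref{lem:block} bounds the number of phases of Dinitz' algorithm by $n$, and I need to show that each phase can be executed in $O(\sigma \log \sigma)$ time using the implicit representation of the level graph.

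First I would bound the per-phase cost. At the start of a phase we have a flow $f$ that is non-zero on only $O(n)$ edges (because of the pruning carried out at the end of the previous phase, or because we start from the zero flow). Given $f$, Lemma~\ref{lem:level} constructs the levels of $L(f)$, all feeder, backward, and draining edges, and compact representations of all forward edges in $O(n+\sigma)$ time. Then I would build the concrete, augmented level graph $L'$ by inserting a middleman vertex $v^j_i$ for each pair $(U_i,V_i)$ in the compact representation of the forward edges from $P_j$ to $R_{j+1}$, exactly as in the construction preceding Lemma~\ref{le:reduction}. This graph has $O(\sigma)$ vertices and edges and can be built in $O(\sigma)$ time.

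Next I would invoke Sleator and Tarjan's algorithm on $L'$, which returns a blocking flow $g'$ in $O(\sigma\log\sigma)$ time; as argued just after the construction of $L'$, the corresponding flow $g$ on $L(f)$ (obtained via the middleman-splitting used in Lemma~\ref{le:reduction}) is a blocking flow of $L(f)$, and it can be produced in $O(\sigma)$ additional time. Augmenting $f$ by $g$ costs $O(n+\sigma)=O(\sigma)$ time and yields a new flow whose support has size $O(\sigma)$. Finally, Lemma~\ref{lem:pruning} prunes this flow in $O(\sigma\log n)=O(\sigma\log\sigma)$ time, leaving a flow whose support is a forest of size $O(n)$, which is the starting condition for the next phase. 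Summing, each phase costs $O(\sigma\log\sigma)$.

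Multiplying by the $n$ phases of Lemma~\ref{lem:block} gives the claimed bound $O(n\sigma\log\sigma)$. No step is a genuine obstacle since all the ingredients are already in place; the only thing to double-check is that pruning between phases is compatible with Lemma~\ref{lem:block}, but this is precisely what Lemma~\ref{lem:forest} was stated for, and the proof of Lemma~\ref{lem:block} was written to accommodate the pruning step. Correctness of the output follows because upon termination there is no $s$--$t$ path in the residual graph, so by standard max-flow duality the current matching has maximum value, and the reconstruction from Lemma~\ref{le:reduction} together with the pruning guarantees that the returned matching satisfies all supply and demand constraints.
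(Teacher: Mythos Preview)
Your proposal is correct and follows exactly the paper's approach: the paper states Theorem~\ref{thm:combinatorial} without an explicit proof, relying on the reader to assemble Lemmas~\ref{lem:block}, \ref{lem:level}, and~\ref{lem:pruning} together with the blocking-flow computation of Section~4.3, which is precisely what you do. Your per-phase accounting ($O(\sigma\log\sigma)$) and the multiplication by the $n$ phases from Lemma~\ref{lem:block} match the intended argument.
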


We can combine this with the compact representations presented in 
Theorems~\ref{thm:point+rectangle},~\ref{thm:point+disk1} and~\ref{thm:point+halfspace}
to obtain the following:
\begin{itemize}
\item The geometric matching problem for a set $P$
  of at most $n$ points and a set $R$ of at most
  $n$ axis-parallel boxes in~$\Reals^d$, for
  constant~$d$, can be solved in
  $O(n^2 \log^{d+1} n)$ time.
\item The geometric matching problem for a set $P$ of at most $n$
  points and a set $D$ of at most $n$ congruent disks in the plane can
  be solved in $O(n^{7/3} \log^2 n)$ time.
\item The geometric matching problem for a set $P$ of at most $n$
  points and a set $H$ of at most $n$ halfspaces in~$\Reals^d$, for
  constant~$d$, can be solved in $O(n^{\frac{3d+1}{d+1}} \polylog n)$
  time.
\end{itemize}

For the bottleneck versions in the plane, we can solve the decision problem 
for the $L_\infty$-metric using the first item, while the second item 
can be used for the $L_2$-metric.
One can then perform a binary search using the techniques described in 
Section~\ref{sec:bottleneck}, which only add an extra
logarithmic factor to the running time needed to solve the decision problem.


\section{The red-blue link-cut tree}
\label{sec:red-blue-link-cut}

It remains to explain how to implement the red-blue link-cut tree.  We
assume some familiarity with the implementation of link-cut trees, a
data structure that maintains a forest of rooted trees, and which was
introduced by Sleator and Tarjan~\cite{sleator} in order to speed up
Dinitz' algorithm. It has since found many other applications.  Modern
presentations use splay trees~\cite{splay} as the underlying data
structure, see for instance Tarjan and Werneck~\cite{tarjan-werneck}.
Werneck's thesis~\cite{werneck} presents several link-cut tree data
structures in detail.

The implementation of link-cut trees differentiates between two types
of tree edges, \emph{solid} and \emph{dashed}.  Every node~$v$ has at
most one solid edge connecting it to one of its children, and the
other children are connected to~$v$ by dashed edges.  (It is possible
that all children of~$v$ are connected to~$v$ by dashed edges.)  A
\emph{solid path} is a maximal contiguous sequence of solid edges.  A
rooted tree is represented as a set of solid paths.  A specific method
for labeling edges as solid and dashed is essential for the efficiency
of link-cut trees, and the reader is referred
to~\cite{sleator,werneck} for further details.  The following
operation on the solid paths is needed for implementing the list of
link-cut tree operations:
\begin{di}
\item $\opexpose(v)$: Make the tree path from~$v$ to $\opfindroot(v)$
  solid by converting dashed edges along the path to solid, and solid
  edges incident to this path to dashed.
\end{di}

A solid path is represented by a splay tree~\cite{splay}, which is a
binary search tree that can support queries and updates efficiently.
We call this splay tree a \emph{solid tree} to distinguish it from the
represented trees maintained by the link-cut tree data structure.

Since we store values at edges, and we also need the~$\opevert$
operation, we store both the nodes and edges of a solid path as
separate vertices in the solid tree.
So the in-order sequence of the solid tree is an alternating
sequence of vertices representing tree nodes and vertices representing
tree edges, in the order in which they appear on the solid path. (This
is explained in Section~2.1.4 of Werneck's thesis~\cite{werneck}.)

For the implementation of~$\opfindroot$, $\opexpose$, $\oplink$,
and~$\opcut$, we again refer the reader to the
literature~\cite{sleator,werneck}.

The~$\opevert(v)$ operation is implemented by first performing
an~$\opexpose(v)$.  This makes the path from~$v$ to the current root
of the tree a solid path, represented by a splay tree.  We then
reverse the sequence of nodes and edges on this solid path by
``flipping'' the entire splay tree.  To perform this flip operation
efficiently, every vertex of the solid tree conceptually carries a
$\bitreverse$-bit. If this bit is set, then the meaning of left child
and right child is reversed.  What is actually stored in each vertex,
however, is the xor of the vertex's $\bitreverse$-bit with the
$\bitreverse$-bit of its parent in the splay tree (solid tree).  This
allows us to flip an entire splay tree by modifying only the bit
stored at the root of the splay tree.

It remains to discuss the implementation of the $\opfindblue$,
$\opaddblue$, and $\opaddred$ operations.  This is done similar
to~$\opfindcost$ and~$\opaddcost$ in normal link-cut trees, as
follows:

For each vertex~$v$ of the splay tree (solid tree) that represents an
edge, we define
\begin{align*}
  \bitblue(v) &= \text{value of the edge if $v$ represents a blue
    edge, else $\infty$} \\
  \bitred(v) &= \text{value of the edge if $v$ represents a red edge,
    else~$\infty$} \\ 
  \bitminblue(v) &= \text{minimum of $\bitblue(u)$ over all
    vertices~$u$ in the subtree of~$v$ of the solid tree} \\
  \bitminred(v) &= \text{minimum of $\bitred(u)$ over all
    vertices~$u$ in the subtree of~$v$ of the solid tree}
\end{align*}
We do not store this information directly, however, but rather use
four fields in each vertex~$v$ of the splay tree, denoted $\bitDblue$,
$\bitDred$, $\bitDminblue$, and $\bitDminred$.  The first two fields
store
\begin{align*}
  \bitDblue(v) &= \bitblue(v) - \bitminblue(v)\\
  \bitDred(v) &= \bitred(v) - \bitminred(v)
\end{align*}
When $v$ is the root of the solid tree (splay tree),  
the other two fields store
\begin{align*}
  \bitDminblue(v) &= \bitminblue(v) \\
  \bitDminred(v) &= \bitminred(v)
\end{align*}
When $v$ is not the root of the solid tree, then they store
\begin{align*}
  \bitDminblue(v) &= \bitminblue(v) - \bitminblue(\bitparent(v)) \\
  \bitDminred(v) &= \bitminred(v) - \bitminred(\bitparent(v))
\end{align*}
Here, $\bitparent(v)$ denotes the parent of vertex~$v$ in the solid
tree (splay tree) and is not to be confused with the parent~$p(v)$ in
the represented tree.

As an additional complication, when~$\bitreverse(v)$ is true, then the
meaning of red and blue is inversed.  Since the color of an edge is
not a static property but depends on the color of the successor vertex
in the solid tree, this allows us to ``flip'' an entire solid path by
just changing the~$\bitreverse$ bit in the root of the solid tree.

The four fields can be maintained during updates and balancing
operations on the splay tree as in normal link-cut
trees~\cite{sleator,werneck}.

To compute~$\bitblue(v)$ and~$\bitminblue(v)$ for a given vertex~$v$
of the solid tree, we need to follow a path from the root of the solid
tree to the vertex~$v$, maintaining~$\bitreverse(u)$,
$\bitminblue(u)$, and~$\bitminred(u)$ while we follow the path.  We
then have ~$\bitreverse(v)$, $\bitminblue(v)$, and~$\bitminred(v)$,
and can read off the actual value of $\bitminblue(v)$ from the
computed values.  The value $\bitblue(v)$ is then computed as $\bitminblue(v) +
\bitDblue(v)$.

To implement~$\opfindblue(v)$, we first invoke $\opexpose(v)$.  The
path to be considered by the operation is now a solid path.  We can
read off the minimum blue cost from the fields~$\bitDminblue$
and~$\bitDminred$ in the root of the solid tree (depending on
the~$\bitreverse$ bit in the same vertex).  To find the endpoint of
the minimum-cost blue edge, we search in the solid tree.

Finally, to implement~$\opaddred(v, x)$ and~$\opaddblue(v, x)$, we
again invoke~$\opexpose(v)$.  It then suffices to add~$x$ to
either~$\bitDminred$ or~$\bitDminblue$ in the root of the solid tree.


\section{Conclusions}
\label{sec:conclusions}

We have shown in Corollary~\ref{coro:L_1+bottleneck} that computing
the $L_\infty$- and $L_1$-bottleneck distance takes near-linear
time. In contrast, for the $L_2$-bottleneck distance
(Corollary~\ref{coro:L_2+bottleneck}) we have two subroutines that
take roughly $O(n^{4/3})$ time each: the decision problem and the
distance selection.  We conjecture that computing the $L_2$-bottleneck
distance has a lower bound of $\Omega(n^{4/3-\eps})$ for any~$\eps>0$.

Consider the matching problem for points on the real line: we have
two sets~$P$ and~$Q$ of points on the real line, with a total of
$n$~points.  Each point $p\in P$ has a supply $s_p>0$ and each point
$q\in Q$ has a demand~$d_q>0$.  For each $\lambda\ge 0$, let
$G_\lambda(P,Q)$ be the bipartite graph connecting a point of~$P$ to a
point of~$Q$ whenever their distance is at most~$\lambda$.  The
bottleneck value to make the matching problem satisfiable is
\[
\lambda^*(P,Q) ~=~ \min \{ \lambda\ge 0\mid  
G_\lambda(P,Q) \text{ has a satisfying matching}\}.
\]
When the total supply equals the total demand, that is, when
$\sum_{p\in P}s_p = \sum_{q\in Q}d_q$, then the following greedy
matching is optimal: consider the points of~$P$ and of~$Q$ by
increasing $x$-coordinate, and assign each supply to the leftmost
demand that is not satisfied yet.  When total supply and total demand
are not equal, however, then we seem to need again binary search
on~$\lambda$: for a fixed~$\lambda$, we can, again with a greedy
matching, check if a satisfying matching exists.  In both cases,
the total running time is~$O(n \log n)$, but the general case requires
considerably more complicated tools. Is there a simpler algorithm that
does not require binary search on~$\lambda$?

Our discussion has centered on incidence graphs of points and ranges.
However, the same approach can be exploited for \emph{any} bipartite
graph that has a compact representation.  For example, let~$B$ be a set
of \emph{blue} pairwise disjoint segments in the plane and let $R$ be
a set of \emph{red} pairwise disjoint segments in the plane.  Consider
the red-blue intersection graph
\[
X(R,B) = (R\sqcup B, ~ \{ rb\mid r\in R,~ b\in B, ~\text{segments $r$ and $b$ intersect}\}).
\]
It follows from the results of Chazelle et al.~\cite{ChazelleEGS94}
that~$X(R,B)$ has a compact representation of size~$O(n\log^2 n)$ that
can be obtained in $O(n\log^2 n)$~time; see for example~\cite[Section
  4]{MorozA16} for an explicit formulation.  It follows that we can
solve a maximum matching in~$X(R,B)$ in~$O(n^{1+\eps})$ time with high
probability, for any~$\eps>0$.

As another example, consider a simple polygon, and split it with a
diagonal~$d$. Let~$P$ be the set of vertices on one side of~$d$, $Q$
the set of vertices on the other side of~$Q$.  A maximum matching
between~$P$ and~$Q$ such that $p \in P$ and~$q \in Q$ can be matched only
if the are mutually visible inside the polygon can be computed
in~$O(n^{1+\eps})$ time, since the visibility edges admit a
near-linear-size compact representation~\cite{Agarwal1994}.

\subparagraph*{Acknowledgements.}

\OurAcknowledgements

\OurFunding

\bibliography{references}

\end{document}